\theoremstyle{plain}
\newtheorem{prop}{Proposition}
\newtheorem{lem}[prop]{Lemma}
\newtheorem{cor}[prop]{Corollary}
\newtheorem{remark}[prop]{Remark}
\theoremstyle{remark}
\renewcommand{\L}{\mathcal{L}}
\newcommand{\B}{\mathcal{B}}
\newcommand{\OO}{\mathcal{O}}
\newcommand{\K}{\mathcal{K}}
\newcommand{\A}{\mathcal{A}}
\newcommand{\M}{\text{M}}
\newcommand{\brad}[1]{( #1|}
\newcommand{\ked}[1]{|#1)}
\newcommand{\braked}[2]{( #1 | #2 )}
\newcommand{\kedbra}[2]{|#1 )( #2|}
\DeclarePairedDelimiterX\hej[1]\lbrace\rbrace{#1}
\DeclarePairedDelimiterX{\braketHS}[2]{\langle}{\rangle_{\textsc{hs}}}{#1|#2}
\DeclarePairedDelimiterX{\expect}[1]{\langle}{\rangle}{#1}
\DeclarePairedDelimiterX{\inner}[2]{\langle}{\rangle}{#1, #2}
\DeclarePairedDelimiterX{\iner}[1]{\langle}{\rangle}{#1, #1}
\begin{document}
\title{Ultimate Speed Limits to the Growth of Operator Complexity}

\author{Niklas H{\"o}rnedal}
\affiliation{Department  of  Physics  and  Materials  Science,  University  of  Luxembourg,  L-1511  Luxembourg, G. D.  Luxembourg}
\affiliation{Fysikum, Stockholms Universitet, 106 91 Stockholm, Sweden}
\email{niklas.hornedal@uni.lu}

\author{Nicoletta Carabba}
\affiliation{Department  of  Physics  and  Materials  Science,  University  of  Luxembourg,  L-1511  Luxembourg, G. D.  Luxembourg}

\author{Apollonas S. Matsoukas-Roubeas}
\affiliation{Department  of  Physics  and  Materials  Science,  University  of  Luxembourg,  L-1511  Luxembourg, G. D.  Luxembourg}

\author{Adolfo del Campo}
\affiliation{Department  of  Physics  and  Materials  Science,  University  of  Luxembourg,  L-1511  Luxembourg, G. D.  Luxembourg}
\affiliation{Donostia International Physics Center,  E-20018 San Sebasti\'an, Spain}


\begin{abstract}
\textbf{\abstractname:}
In an isolated system, the time evolution of a given observable in the Heisenberg picture can be efficiently represented in Krylov space. In this representation, an initial operator becomes increasingly complex as time goes by, a feature that can be quantified by the Krylov complexity. We introduce a fundamental and universal limit to the growth of the Krylov complexity by formulating a Robertson uncertainty relation, involving the Krylov complexity operator and the Liouvillian, as generator of time evolution. We further show the conditions for this bound to be saturated and illustrate its validity in  paradigmatic models of quantum chaos.
\end{abstract}

\maketitle

\vspace{1cm}

\noindent\textbf{\textsf{Introduction}}
\lettrine[lraise=0.1, nindent=0em, slope=-.5em]{Q}{uantum} speed limits (QSL) impose fundamental constraints on the pace at which a physical process can unfold.
Since their conception \cite{MT45,ML98}, they have been formulated as bounds on the minimal time at which a distance between quantum states can be traversed.
The freedom in the choice of the distance can be used to sharpen the discrimination between quantum states, and with it, the notion of the speed of evolution \cite{Pires16,Campaioli18}. 
Additional efforts have been devoted to exploring the role of the underlying dynamics, generalizing early results from isolated systems to open \cite{Taddei13,delcampo13,DeffnerLutz13,Campaioli2019} and classical processes \cite{Shanahan18,Okuyama18}. The resulting speed limits have become a useful tool in various branches of physics, ranging from information processing \cite{Lloyd00}  to many-body physics \cite{Bukov19}, quantum control \cite{Caneva09} and  quantum metrology \cite{Giovannetti11}. However, traditional QSL are too conservative in estimating the relevant time scales in many processes, such as thermalization \cite{Eisert15}. This has motivated the development of speed limits suited for specific measures and observables \cite{Nicholson20}, as in the pioneering work by Mandelstam and Tamm \cite{MT45}. In this sense, certain speed limits follow from generalized uncertainty relations such as those derived by Heisenberg and Robertson \cite{Braunstein96}.

In parallel with the study of QSL, quantifying the complexity of a physical process is a central task for the advancement of fundamental physics and quantum technologies. 
 Lloyd pointed out that the computational complexity of physical processes is limited by QSL \cite{Lloyd02}. Analogously, the circuit complexity of a quantum state  \cite{Susskind16}, defined as the number of elementary operations required to generate it from a reference state, can be characterized in terms of conventional QSL \cite{Brown16,Brown16prd,Chapman18,MolinaVilaplana18}. 
A complementary approach for many-body quantum systems focuses on the buildup of  complexity in the time-evolution of an initial  local observable, known as
operator growth \cite{keyserlingk2018,khemani2018,nahum2018,gopalakrishnan2018,rakovsky2018}. 
The intuition is that simple operators unitarily evolve into increasingly complex ones. Quantum information initially encoded in a few degrees of freedom is thus scrambled over the system in the course of evolution, making it impossible to recover it through local measurements and giving rise to thermalization. The unambiguous description of this scrambling process remains an open problem. One possibility is to probe it via an out-of-time-ordered correlator \cite{larkin1969,maldacena2016} that may be used to identify an analog of the Lyapunov exponent, providing a connection with classical chaos, e.g., the butterfly effect.
Such quantum Lyapunov exponent obeys a universal upper bound \cite{maldacena2016}, which helps refine the notion of maximal chaos, is saturated by black holes, and is further tied to the eigenstate thermalization hypothesis \cite{murthy2019,srednicki1994}.
A related approach, which we shall pursue in this work, is to study the dynamical evolution of operators in Krylov space, exploited in numerical techniques such as the recursion method \cite{Viswanath94}. In this context,  operator growth is quantified by the so-called Krylov complexity, a measure of the delocalization of the time-dependent operator in the Krylov basis \cite{parker2019,Barbon2019,rabinovici2021,dymarsky2021,Jian21}. 
The authors of \cite{parker2019} made a conjecture on the universal operator growth, namely, that Krylov complexity can grow at most exponentially, and it does so in generic non-integrable systems. Remarkably, its growth rate upper bounds the Lyapunov exponent, establishing a connection with the bound on out-of-time-ordered correlators \cite{maldacena2016,Dymarsky20}. Further studies have shown that exponential operator growth is possible in free and integrable systems \cite{DymarskySmolkin21},  while the role of the interaction graph in a quantum network has been explored in \cite{OlleRosa22}.  

Here, we characterize the growth of Krylov complexity by deriving a fundamental limit on its rate of change and by studying analytically the conditions under which this bound is saturated. Our results show that saturation, which is also found to correspond to a particular notion of minimum uncertainty, occurs whenever the dynamical evolution of the system has the underlying structure of a three-dimensional \textit{complexity algebra}, which was introduced by \cite{caputa2021}. In this setting, the unitary evolution of an operator can be represented as the displacement of generalized coherent states \cite{caputa2021},  which display classical-like behavior \cite{perelomov1986}.
As demonstrated in several paradigmatic examples, the saturation of the growth rate may be possible in some chaotic systems, but quantum chaos is not required for it.
\\ \\
\noindent\textbf{\textsf{Results and Discussion}}\\
\noindent\textbf{\textsf{\small Quantum dynamics in Krylov space}}\\
\noindent Consider an isolated quantum system in which the time evolution of an observable $\mathcal{O}$  is generated by a time-independent Hamiltonian $H$ according to the Heisenberg equation of motion $\partial_t \mathcal{O}(t) = i[H,\mathcal{O}(t)]$, setting $\hbar=1$. The solution to this equation with the initial condition $\mathcal{O}(0) = \mathcal{O}$ is given by $\mathcal{O}(t) = e^{itH}\mathcal{O} e^{-itH}$. In terms of the Liouvillian superoperator given by $\L = [H,\cdot]$, the Taylor expansion of the time-evolving observable $\mathcal{O}(t) = \sum_{n=0}^\infty \frac{(it)^n}{n!}\L^n\mathcal{O}$ shows that its dynamics is contained in the complex linear span of the operators $\{\L^n\mathcal{O}\}_{n=0}^\infty$. This span is   completely determined by the Hamiltonian and the initial observable and  is known as the Krylov space. 

From now on, we consider the restriction of each operator and superoperator to the Krylov space. To highlight the vector space structure, we  make use of the braket notation $\ked{A}$ when expressing an operator $A$ in an equation. We choose to equip the Krylov space with an inner product satisfying the properties 
\begin{enumerate}
\label{inner list}
\item $\braked{A}{\L B} = \braked{\L A}{B}$, $\forall A,B$.
\item $\braked{A}{\L A} = 0$, when $A$ is Hermitian.
\end{enumerate}
An example of a family of inner products satisfying these two properties is given by $\braked{A}{B} = \expect{e^{ \beta H/2}A^\dagger e^{- \beta H/2 }B}_{\beta}$. The bracket $\expect{\cdot}_\beta$  denotes the thermal expectation value with respect to the equilibrium Gibbs state $e^{-\beta H}/Z$ and thus $\braked{A}{B} $ reduces to  the Hilbert–Schmidt inner product when $\beta=0$, up to a normalization factor. It follows from the second property of the inner product that the operators $\mathcal{O}$ and $\L\mathcal{O}$ are orthogonal. Let $b_0 = \norm{\mathcal{O}}$ and $b_1 = \norm{\L\mathcal{O}}$, where $\norm{\cdot}$ is the norm induced by the inner product. By starting from the normalized vectors $\mathcal{O}_0 = \mathcal{O}/b_0$ and $\mathcal{O}_1 = \L\mathcal{O}/b_1$, we can construct an orthonormal basis $\{\mathcal{O}_n\}_{n=0}^{D-1}$ for the Krylov space by applying the Lanczos algorithm. This algorithm works as follows: given the first $n+1$ basis vectors, one constructs the orthogonal vector $\ked{A_{n+1}} = \L\ked{\mathcal{O}_n}-b_{n}\ked{\mathcal{O}_{n-1}}$, where $b_n = \norm{A_n}$ and then normalize it to obtain $\ked{\mathcal{O}_{n+1}}$. We call the constructed basis \textit{the Krylov basis}. It is possible that the Krylov dimension $D$ is infinite, in which case the Lanczos algorithm never halts. We remark that the Lanczos algorithm is only guaranteed to construct an orthonormal basis if the Liouvillian is self-adjoint, i.e., the first property of the inner product is satisfied. Generally, the Lanczos algorithm  involves a third term on the right-hand side of the equation for $\ked{A_{n+1}}$. This term is however always zero whenever the second property of the inner-product is satisfied. Thus, with our chosen inner-product, the action of the Liouvillian on the Krylov basis takes the specific form $\L\ked{\mathcal{O}_n} = b_{n+1}\ked{\mathcal{O}_{n+1}} + b_n\ked{\mathcal{O}_{n-1}}$. As  pointed out in \cite{caputa2021}, this motivates one to consider abstract raising and lowering operators that we denote by $\L_+$ and $\L_-$, respectively. Their action on the Krylov basis is given by $\L_+\ked{\mathcal{O}_n} = b_{n+1}\ked{\mathcal{O}_{n+1}}$ and $\L_-\ked{\mathcal{O}_n} = b_{n}\ked{\mathcal{O}_{n-1}}$. The Liouvillian can then  be expressed as their sum.

It is further convenient to introduce the real valued functions $\varphi_n(t)$, which appear in the expansion of $\mathcal{O}(t)$ as $\ked{\mathcal{O}(t)} = \frac{1}{\norm{\mathcal{O}}}\sum_{n=0}^{D-1}i^n\varphi_n(t)\ked{\mathcal{O}_n}$.  We will refer to these functions as \textit{the amplitudes} of the observable. These amplitudes evolve according to the recursion relation $\partial_t\varphi_n(t) = b_{n}\varphi_{n-1}(t) - b_{n+1}\varphi_{n+1}(t)$ with the initial conditions $\varphi_0(0) = 1$ and $\varphi_n(0) = 0$ for $n>0$. Thinking of the Krylov basis vectors as forming the sites of a one dimensional lattice,   $b_n$ can be interpreted as a hopping amplitude, see, e.g.,  \cite{parker2019,Barbon2019}. In this sense, one can think of $\mathcal{O}$ as a one dimensional discrete wave function that is initially localized and then spreads out over the lattice as time evolves. An increase in the population of the sites further away from the origin reflects a greater increase of complexity of the observable. In order to quantify this, it is natural to consider the Krylov complexity of $\mathcal{O}(t)$, defined to be
\begin{equation}
    K(t) = \sum_{n=0}^{D-1} n\abs{\varphi_n(t)}^2.
\end{equation}
The main task of our work is to bound the growth of Krylov complexity. Due to unitary dynamics, the norm of the evolution is preserved and the Krylov complexity is unchanged if one normalizes the operators studied. We will, therefore, without loss of generality, consider $\mathcal{O}$ to be normalized.
By introducing the complexity operator $\K = \sum_{n=0}^{D-1} n\kedbra{\mathcal{O}_n}{\mathcal{O}_n}$, which plays the role of the position operator in the Krylov lattice,  it is possible to express Krylov complexity as the ``expectation value'' of $\K$ with respect to $\mathcal{O}(t)$. More precisely, if $\expect{\K}_t \equiv \braked{\mathcal{O}(t)}{\K\mathcal{O}(t)}$ then $K(t) = \expect{\K}_t$.

\smallskip

\noindent{\textbf{\textsf{\small Dispersion bound on Krylov complexity}}}\\
\noindent If the Krylov space forms an inner product space in which $\A$ and $\B$ are self-adjoint superoperators, then there ought to exist a Robertson uncertainty relation given by $\Delta\A\Delta \B \geq \frac{1}{2}\abs{\expect{[\A,\B]}}$, where $\Delta \A = \sqrt{\expect{\A^2}-\expect{\A}^2}$ is the dispersion of $\A$ with respect to some state $\ked{A}$. When the Krylov dimension is infinite it is necessary that $\ked{A}$ is contained in the intersection between the domains of $\A\B$ and $\B\A$, otherwise the inequality might not hold \cite{Davidson1965}. Letting $A = \mathcal{O}(t)$, $\A = \L$, $\B = \K$ and noting that $\Delta\L = b_1$, we can rewrite the uncertainty relation as 
\begin{equation}
\label{dispersion bound}
    \abs{\partial_tK(t)} \leq 2b_1\Delta\K.
\end{equation}
In other words, the growth of Krylov complexity is upper bounded by a constant times the dispersion of the complexity operator. 
By defining a characteristic time-scale $\tau_K=\Delta\K/ \abs{\partial_tK(t)}$, one obtains  $\tau_Kb_1\geq1/2$ which takes the form of a Mandelstam-Tam bound, and emphasizes the role of $b_1= \norm{\L\mathcal{O}}$ as a norm of the generator of evolution in Krylov space.
To avoid confusion with the uncertainty relation for observables, we will refer to this bound as \textit{the dispersion bound}. We note that no bound tighter than \eqref{dispersion bound} can be found by considering the more general Schr{\"o}dinger uncertainty relation, as  the extra term given by the anti-commutator  identically vanishes, as shown in Methods.

It is not self-evident that saturation of the dispersion bound can be achieved under unitary dynamics of the observable. 
There are very specific relations between $\L$, $\mathcal{O}$ and $\K$ that need to hold: the Liouvillian is required to be tridiagonal in the eigenbasis of the complexity operator and the initial state of the observable is required to be parallel to the eigenvector with the lowest eigenvalue. The conditions for the saturation of the dispersion bound are thus highly constrained and differ from those known for saturation of a Robertson uncertainty relation in general. 
The required conditions admit a geometrical interpretation, elaborated in Methods. The bound is saturated if and only if the evolution curve moves along the gradient of the Krylov complexity. This requires that the dynamics is directed along the direction that maximizes the local growth of complexity; see Methods. The only exception involves extremal points in which any direction away from the extremal point leads to saturation. This is indeed the case for $t=0$. 
Indeed,  there exists Liouvillians of the form $\L = \L_+ + \L_-$ for which the tangent of the generated path will be parallel with the gradient for all times. 

\smallskip

\noindent{\textbf{\textsf{\small Saturation of the dispersion bound}}}\\
\noindent  Time evolutions saturating the dispersion bound are characterized by  a unique algebraic structure.  Define the superoperator $\B = \L_+ -\L_-$. Following \cite{caputa2021}, we consider their {\it simplicity hypothesis}: namely, the assumption that $\L$, $\B$ and the commutator $\Tilde{\K} = [\L, \B]$ close an algebra with respect to the Lie bracket. It was shown in \cite{caputa2021} that this forces $\Tilde{\K}$ to be related to the complexity operator via $\Tilde{\K} = \alpha\K+\gamma$, where $\alpha,\gamma\in\mathbb{R}$. We show in Supplementary Note 2 that $\gamma$ is a positive number and $\alpha$ is a real number satisfying the condition $\alpha\geq 0$ for infinite Krylov dimension and $\alpha = -\frac{2\gamma}{D-1}$ for finite Krylov dimension. Moreover, the only possible closure of the algebra is given by the commutation relations
\begin{equation} \label{algebra closure}
        [\L, \B] = \Tilde{\K},\quad [\Tilde{\K}, \L] = \alpha\B,\quad[\Tilde{\K}, \B] = \alpha\L.
\end{equation}
%
  \begin{figure}[t]
  \centering
\includegraphics[width=1\linewidth]{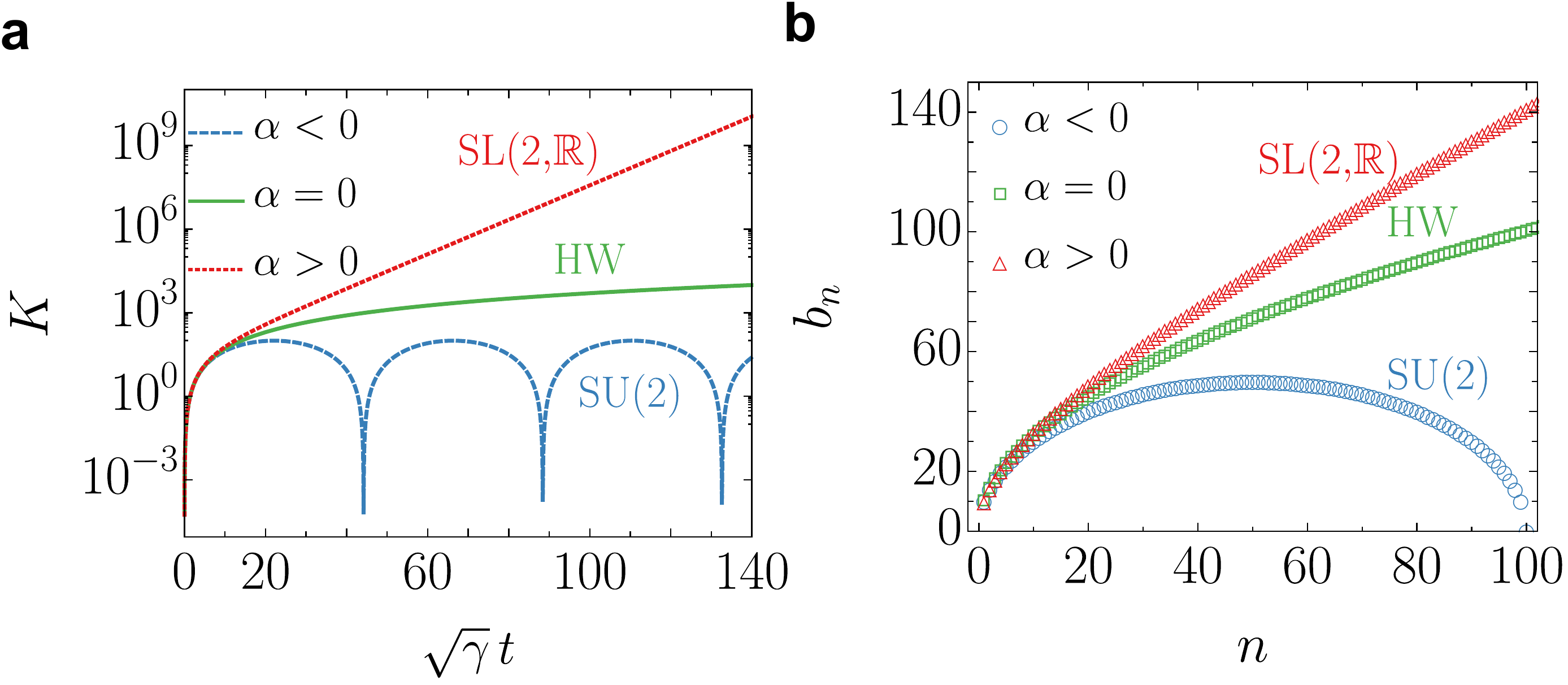}
\caption{{\bf Growth of  Krylov complexity at the speed limit.}
Saturation of the dispersion bound occurs in three different scenarios, each of which is associated with a different complexity algebra, that is specified by the sign of $\alpha$.   
{\sf{\textbf{a}}} Time-dependence of the Krylov complexity. 
{\sf{\textbf{b}}} The  corresponding growth of the Lanczos coefficients  in the Krylov lattice.   
The plots are representative of  the three different scenarios. The Krylov dimension in the $\textrm{SU}(2)$ case is $D=100$, and  infinite in all other cases.  In {\sf{\textbf{b}}} we choose $\alpha=4$ and $-4$ for the $\textrm{SL}(2,\mathbb{R})$ and $\textrm{SU}(2)$ algebras, respectively, while $\alpha$ is always zero in the $\textrm{HW}$ case. Finally, the parameter $\gamma$ in {\sf{\textbf{b}}} is chosen in each case such that the corresponding Lanczos coefficients share the same behavior near the origin of the Krylov lattice. Specifically, $\gamma= 202,\, 200,\, 198$ for the $\textrm{SL}(2,\mathbb{R})$, $\textrm{HW}$ and $\textrm{SU}(2)$ algebras, respectively. 
   }
  \label{SatDBound}
    \end{figure} 
Given this algebra, the evolving observable can be interpreted as a curve of generalized coherent states evolving according to the displacement operator $D(\xi) = e^{\xi\L_+ -\overline{\xi}\L_-}$, where $\xi = it$. Moreover, the initial state is the highest weight state of the representation, which is annihilated by $\mathcal{L_-}$ by construction. Coherent states can be viewed as the states closest to the classical ones in the sense that they typically minimize an uncertainty relation. It is for example known that coherent states of the Harmonic oscillator saturate the Robertson uncertainty relation for the pair of observables of position and momentum. Building on this intuition, we could expect that the dispersion bound is saturated for the simplicity hypothesis. It turns out that this intuition is indeed correct. In fact, as we show in  Supplementary Note 2, the dispersion bound is saturated if and only if the simplicity hypothesis holds. The saturation of the dispersion bound dictates the evolution of the Krylov complexity, where three different scenarios are possible, as shown in Figure \ref{SatDBound}{\sf{\textbf{a}}}.  The growth of complexity at the speed limit is described by the differential equation 
\begin{equation}
\partial_t^2K(t) = \alpha K(t) + \gamma,
\end{equation}
 with the conditions that $K(0)=0$ and $K(-t) = K(t)$.
 For finite Krylov dimension, saturation of the dispersion bound sets  the complexity growing according to $K(t) = (D-1)\sin^2{\omega t}$, where $\omega = \sqrt{\frac{\gamma}{2(D-1)}}$. In this, case, the corresponding complexity algebra \eqref{algebra closure} reduces to the $\textrm{SU}(2)$ algebra. By contrast, for infinite Krylov dimension there are two distinct scenarios for the complexity growth: for $\alpha>0$ one finds $K(t) = \frac{2\gamma}{\alpha}\sinh^2{\frac{\sqrt{\alpha} t}{2}}$, while for $\alpha=0$ the solution reads $K(t) = \frac{\gamma}{2}t^2$. The complexity algebra in these two cases reduces to $\textrm{SL}(2,\mathbb{R})$ and the Heisenberg-Weyl algebra (\textrm{HW}), respectively. Reference examples maximizing the Krylov-complexity growth rate at all times are discussed in Supplementary Note 1. One such example with $\alpha>1$ is the Sachdev-Ye-Kitaev (SYK) model \cite{Sachdev93}, a paradigm of quantum chaos. However, the saturation of the bound does not require quantum chaos and can indeed be achieved by a single qubit, with $\alpha=0$ (Supplementary Note 1). 
Together with the time-dependence of $K(t)$ and the complexity algebra, the value of $\alpha$ also determines the growth of the Lanczos coefficients in the Krylov lattice.
As proven in Supplementary Note 2,  the dispersion bound is saturated if and only if the Lanczos coefficients grow according to
\begin{equation}
\label{Lanczos growth}
b_n = \sqrt{\frac{1}{4}\alpha n(n-1) + \frac{1}{2}\gamma n},
\end{equation}
exhibiting  three different scalings as function of $\alpha$, see Figure \ref{SatDBound}{\sf{\textbf{b}}}. That the simplicity hypothesis implies \eqref{Lanczos growth} has already been pointed out in \cite{caputa2021}.
 For $\alpha>1$ and large $n$, this dependence captures the linear growth $b_n=\sqrt{\alpha} n$ conjectured by Parker {\it et al.} to hold in generic non-integrable systems, maximizing the Krylov complexity growth \cite{parker2019}.

\smallskip

\noindent{\textbf{\textsf{\small Krylov complexity in generic systems}}}\\
\noindent 
We next discuss the Krylov complexity growth in generic systems not fulfilling the simplicity hypothesis.
We can use Eq. (\ref{Lanczos growth}) to estimate when and at what time scale a generic  system deviates  from the bound. By expanding Krylov complexity up to fourth order we find that $K(t) = b_1^2t^2 +  \frac{1}{6}b_1^2(2b_2^2-b_1^2)t^4 + O(t^6)$. Since we can always find a value on $\alpha$ and $\gamma$ such that $b_1$ and $b_2$ satisfy \eqref{Lanczos growth}, we conclude that the bound \eqref{dispersion bound} is saturated up to third order in time. By expanding the Krylov complexity up to sixth order, we find that the Lanczos coefficient $b_3$ will appear in the last term and since we are not guaranteed to be able to find a value on $\alpha$ and $\gamma$ such that $b_1$, $b_2$ and $b_3$ satisfy\eqref{Lanczos growth}, we conclude that the system can only start deviating from the bound \eqref{dispersion bound} as a result from fifth order terms in the expansion. We can estimate this time scale by finding the value of $t$ for which the third order coefficient of $\partial_t K(t)$ is equal to its fifth order coefficient. We will call this time the \textit{deviation time}, denoted by $\tau_\text{d}$, and it is explicitly given by
\begin{equation}
\label{tbreak}
\tau_\text{d} = \sqrt{\frac{\frac{2}{3}b_1^2(2b_2^2-b_1^2)}{\frac{1}{20}b_1^2(b_1^2+b_2^2)-\frac{1}{5}b^2_2(b_1^2+b_2^2+b_3^2) +\frac{1}{2}b_2^2b_3^2}}.
\end{equation}

%
  \begin{figure}[t]
  \centering
\includegraphics[width=1\linewidth]{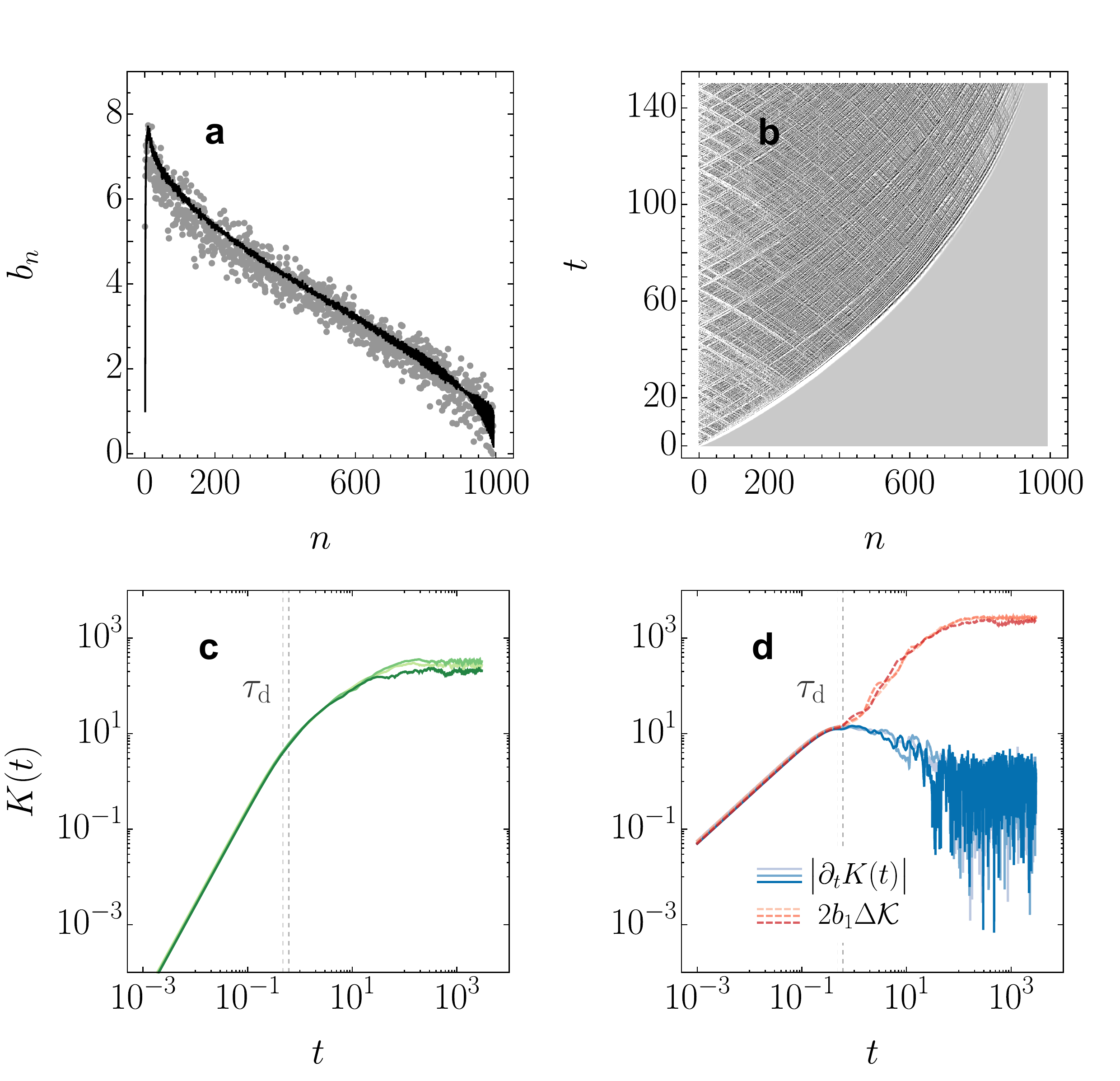}
\caption{{\bf Growth of  Krylov complexity in a generic system.}
{\sf{\textbf{a}}} Squares of the Lanczos coefficients for a single realization (gray points) and an average over $100$ random Hamiltonian matrices (black line).    
{\sf{\textbf{b}}} Operator growth in the Krylov lattice as  displayed by the dynamics of the amplitudes $|\varphi_n (t)|^2$ for a single random matrix realization.   
{\sf{\textbf{c}}} Krylov complexity (green solid lines) together with the deviation time (gray dashed line) for three independent random matrix realizations. 
{\sf{\textbf{d}}} The corresponding absolute value of  the growth rate of the Krylov complexity (blue solid lines),  together with the dispersion bound (red dashed lines),  Eq. \eqref{dispersion bound}.   
  In all figures the random Hamiltonian matrices are sampled from $\mathrm{GOE}(d)$ with standard deviation $\sigma=1$, maximal Krylov dimension $D = 993$ and a uniform initial observable operator $\mathcal{O}$. 
   }
  \label{KBcombplot}
    \end{figure} 
    

To get an understanding of the complexity growth in a generic setting, we next illustrate the Krylov dynamics of a system described by a random matrix Hamiltonian.  Specifically, we consider the Krylov complexity of an ensemble $\mathcal{E}(H)$  of random matrix Hamiltonians, a paradigm of quantum chaos \cite{Haake}.  We sample the Hamiltonian matrices $H$ from the Gaussian Orthogonal Ensemble  $\mathrm{GOE}(d)$, where $d$ is the dimension of the Hilbert space.  We then  calculate the Lanczos coefficients $\{b_n\}$ with partial re-orthogonalization \cite{simon1984,rabinovici2021}.  
Specifically,  we consider samples of real matrices $H=(X+X^\intercal)/2$,  where all elements $x \in \mathbb{R}$ of $X$ are pseudo-randomly generated with probability measure given by the normal distribution,  $\exp( -x^2/ \big(2 \sigma^2\big)) / ( \sigma \sqrt{2 \pi} )$.  
In order to study the general behaviour of Lanczos coefficients, we choose an initial observable which is represented as the normalized vector $| \mathcal{O} ) = (1/d,1/d, \dots, 1/d)^T$, expressed in a fixed eigenbasis of the Liouvillian. However, the following results do not depend strongly on the choice of $\mathcal{O}$, provided it is dense in the eigenbasis of the Hamiltonian. Figure \ref{KBcombplot}{\sf{\textbf{a}}} shows the squares of the Lanczos coefficients for a single realization  and the average $ \langle \{b_n\} \rangle_{\mathcal{E}(H)}$ over $100$ different Hamiltonians of dimension $d=32$, sampled from 
$\mathrm{GOE}(d)$ with standard deviation $\sigma=1$.
Operator growth is displayed by the time-dependent amplitudes, which are found by solving the recursion relation and exhibit diffusion-like dynamics on the Krylov basis, shown for a single realization in Figure \ref{KBcombplot}{\sf{\textbf{b}}}.
The corresponding time evolution of Krylov complexity and its growth rate are shown in panels {\sf{\textbf{c}}} and {\sf{\textbf{d}}}, respectively.
Hamiltonians sampled from GOE($d$) behave as a generic system, given that the Lanczos coefficients do not, in general, grow according to \eqref{Lanczos growth} as shown in Figure \ref{KBcombplot}{\sf{\textbf{a}}}. As a result, the growth rate starts deviating from the dispersion bound around the time scale $\tau_\text{d}$ in Eq. \eqref{tbreak}, indicated by the vertical line in  Figure \ref{KBcombplot}, panels  {\sf{\textbf{c}}} and {\sf{\textbf{d}}}. In short, while GOE Hamiltonians provide a useful paradigm in the description of quantum chaotic systems,  the dynamics generated by them does not maximize the growth of Krylov complexity for $t>\tau_\text{d}$. 

Our results establish the ultimate speed limit to operator growth in isolated quantum systems. Specifically, 
the dispersion bound  governs the growth rate of Krylov complexity, playing the role of a Mandelstam-Tamm uncertainty relation in operator space.  This bound is saturated by quantum systems in which the Liouvillian governing the time evolution fulfills a simplicity algebra. The latter arises naturally in certain quantum chaotic systems, such as the SYK model. However, other paradigmatic instances of quantum chaos, such as random-matrix Hamiltonians, do not maximize the growth of Krylov complexity. Indeed, saturation of the bound does not require quantum chaos and can be achieved, e.g.,  by a single qubit.


\noindent{\textbf{\textsf{Methods}}}\\
\noindent\textbf{\textsf{\small Vanishing of the anticommutator contribution in the Robertson uncertainty relation for $\mathcal{K}$ and $\mathcal{L}$.}}
 We establish a universal feature of Krylov complexity, valid for any physical system: namely, that its anticommutator with the Liouvillian $\mathcal{L}$ has vanishing expectation value over the evolved operator $\ked{\mathcal{O}(t)}$. The relevance of this result relies on the fact that this quantity enters the Schr\"odinger uncertainty principle for the two operators $\mathcal{K}$ and $\mathcal{L}$
\begin{align} \label{Unc}
4(\Delta \mathcal{K}\Delta \mathcal{L})^2 &\geq  
|\brad{\mathcal{O}(t) } [\mathcal{K},\mathcal{L}] \ked{ \mathcal{O}(t)}|^2 \\
&+ |\brad{\mathcal{O}(t) } \{\mathcal{K},\mathcal{L}\} \ked{ \mathcal{O}(t)}|^2,  \nonumber
\end{align}
from which one can bound the complexity rate $\partial_t K$. We have that 
\begin{equation} \label{commKL}
\brad{\mathcal{O}(t) } [\mathcal{K},\mathcal{L}] \ked{ \mathcal{O}(t)}=2i\operatorname{Im}\brad{\mathcal{O}(t) } \mathcal{K}\mathcal{L}\ked{ \mathcal{O}(t)}
\end{equation}
and
\begin{equation} \label{anticommKL}
\brad{\mathcal{O}(t) } \{\mathcal{K},\mathcal{L}\} \ked{ \mathcal{O}(t)}=2\operatorname{Re}\brad{\mathcal{O}(t) }\mathcal{K}\mathcal{L}\ked{ \mathcal{O}(t)},
\end{equation}
where
\begin{equation} \label{DiracKL}
\mathcal{K}\mathcal{L} = \sum_{n=0}^{D-1} b_{n+1} [ \, n\ked{\mathcal{O}_n}\brad{\mathcal{O}_{n+1}} + (n+1)\ked{\mathcal{O}_{n+1}}\brad{\mathcal{O}_{n}} \, ].
\end{equation}
Let us now demonstrate that the anticomutator term in Eq.~\eqref{anticommKL} is identically zero. By expanding $\ked{ \mathcal{O}(t)}$ over the Krylov basis we obtain
\begin{equation}
\begin{split}
\brad{\mathcal{O}(t) }\mathcal{K}\mathcal{L}\ked{ \mathcal{O}(t)}&= \sum_{m,n,k=0}^{D-1} (-i)^m i^k\varphi_m\varphi_k b_{n+1}(n\delta_{mn}\delta_{n+1,k}\\
&+ (n+1)\delta_{m,n+1}\delta_{nk}),
\end{split}
\end{equation}
which, by performing the sums over $k$ and $n$, yields
\begin{equation} \label{resKL}
\begin{split}
\brad{\mathcal{O}(t) }\mathcal{K}\mathcal{L}\ked{ \mathcal{O}(t)}&=i\sum_{m=0}^{D-1} m\varphi_m(\varphi_{m+1}b_{m+1}-\varphi_{m-1}b_{m})\\
&=-i\sum_{m=0}^{D-1} m\varphi_m \partial_t\varphi_m.
\end{split}
\end{equation}
Since the amplitudes $\varphi_n$ and the coefficients $b_n$ are real quantities, comparing Eqs.~\eqref{anticommKL} and \eqref{resKL} we immediately conclude that
\begin{equation} \label{vanishing anticomm}
\brad{\mathcal{O}(t) } \{\mathcal{K},\mathcal{L}\} \ked{ \mathcal{O}(t)}=0 \quad \forall\,t.
\end{equation}
Let us note that the key condition to obtain this result is the fact that the Liouvillian connects only states that are nearest-neighbors on the Krylov lattice, so that we are left with a purely imaginary phase $(-i)^m(i)^{m\pm1}=\pm i$. It is this peculiar property that allows the Liouvillian to be interpreted as a sum of generalized ladder operators $\mathcal{L}_\pm$ \cite{caputa2021}. However, let us point that here we are not making any assumption regarding the commutation rules between these operators: we are considering the structure of Krylov space in full generality. 

Moreover, from Eq.~\eqref{commKL} we immediately obtain the relation between the anticommutator $[\mathcal{K},\mathcal{L}]$ and the complexity rate $\partial_t K$: 
\begin{equation} \label{rate}
\brad{\mathcal{O}(t) } [\mathcal{K},\mathcal{L}] \ked{ \mathcal{O}(t)}=-2i\sum_{m=0}^{D-1} m\varphi_m \partial_t\varphi_m=-i\partial_t K.
\end{equation}
Therefore, the Schr\"odinger uncertainty relation ~\eqref{Unc} can be recast as the dispersion bound \eqref{dispersion bound} on the growth of Krylov complexity:
\begin{equation} \label{bound}
|\partial_t K| \leq 2 b_1 \Delta \mathcal{K}.
\end{equation}

\noindent\textbf{\textsf{\small Geometrical interpretation of the saturation of the bound.} }
For the geometrical interpretation of the saturation of the bound, we  assume the Krylov space to be of finite dimension. 
However, the results could potentially be extended to infinite-dimensional Krylov spaces as well.

The Krylov space is isomorphic to a $2D$-dimensional real vector space and we can therefore consider the Euclidean metric $g$, given by the real part of the inner product. The evolution curve of $\mathcal{O}$ will then be restricted to the unit sphere of the Krylov space. This unit sphere forms a Riemannian manifold and we can consider the Krylov complexity as a function on this manifold defined by $K(\A) = \braked{\A}{\K\A},$ for any element $\ked{\A}$ in the Krylov space with unit norm. In this sense, when we write $K(t)$ we simply mean $K(\mathcal{O}(t))$ which is consistent with how we defined complexity for the evolution. The differential of Krylov complexity will be denoted by $dK$ and its action on any tangent vector $\dot{\A}$ at $\A$ is given by $dK(\dot{\A}) = \braked{\dot{\A}}{\A} + \braked{\A}{\dot{\A}}$. This differential together with the metric can be used to define the gradient of Krylov complexity. It follows from the theory of differential geometry that the gradient of Krylov complexity at $\A$, denoted by $\nabla K(\A)$, is the unique vector satisfying the expression $g(\nabla K(\A), \dot{\A}) = dK(\dot{\A})$ for all tangent vectors $\dot{\A}$ at $\A$ \cite{Lee2018}. It can be checked that the gradient must then be given by $\nabla K(\A) = 2(\K-\expect{\K})\A$, which indeed is tangent to the unit sphere at $\A$. The change of Krylov complexity along the curve $\mathcal{O}(t)$, generated by the Liouvillian, is given by $\partial_tK(t) = g(\nabla K(t), \partial_t \mathcal{O}(t))$, where $\nabla K(t)$ is the gradient at $\mathcal{O}(t)$. Applying the Cauchy-Schwarz inequality on the right-hand side gives us the inequality
\begin{equation}
    \abs{\partial_tK(t)} \leq \norm{\nabla K(t)}\norm{\partial_t \mathcal{O}(t)}.
\end{equation}
The right-hand side of this inequality is exactly $2b_1\Delta\K$ and we note that it is saturated if and only if the tangent vector of $\mathcal{O}(t)$ is parallel to the gradient of Krylov complexity. We also note that the gradient is the zero vector at time zero and so the dispersion bound is always initially saturated.

The unitary orbit of $\mathcal{O}$ is the set of all points $U^\dagger\mathcal{O} U$, where $U$ is a unitary operator. We emphasize that this is a proper subset of the unit sphere in Krylov space which, in contrast, is the set of all points $\mathcal{U}\mathcal{O}$, where $\mathcal{U}$ is a unitary superoperator. The gradient we have considered is with respect to the unit sphere and it is therefore not obvious that this gradient will ever be tangential to the unitary orbit of $\mathcal{O}$. 
However, the gradient is indeed tangential to the unitary orbit at time zero and at all times provided the simplicity algebra is fulfilled. 

\noindent\textbf{\textsf{\small On the closure of the complexity algebra.}} 
Here we show the proof that the only possible closure of the complexity algebra introduced by \cite{caputa2021} is given by Eq.~\eqref{algebra closure}. The (anti-Hermitian) operator $\mathcal{B}=\mathcal{L}_+-\mathcal{L}_-$ ``conjugated'' to the Liouvillian can be expanded in Krylov space as
\begin{equation} \label{B}
\mathcal{B}=\sum_{n=0}^{D-1} b_{n+1} [ \, \ked{\mathcal{O}_{n+1}}\brad{\mathcal{O}_n} - \ked{\mathcal{O}_{n}}\brad{\mathcal{O}_{n+1}} \, ].
\end{equation}
We note that one can establish a formal analogy with the harmonic oscillator: $\mathcal{L}$ plays the role of the position of the harmonic oscillator, while $i\mathcal{B}$ corresponds to its momentum. However, in general the commutator between $\mathcal{L}$ and $\mathcal{B}$ is not proportional to the identity, indeed:
\begin{equation} \label{Ktilde}
\tilde{\mathcal{K}}=2[\mathcal{L}_+,\mathcal{L}_-]=2\sum_{n=0}^{D-1} (b^2_{n+1}-b^2_n) \ked{\mathcal{O}_n}\brad{\mathcal{O}_n},
\end{equation}
where it is understood that $b_0$ has to be replaced with $0$. Let us now investigate the conditions under which $\mathcal{L}$, $\mathcal{B}$ and $\tilde{\mathcal{K}}$ form a closed algebra with respect to the operation $[,]$: the so-called \textit{complexity algebra} \cite{caputa2021}. This happens if and only if the commutators $[\mathcal{L},\tilde{\mathcal{K}}]$ and $[\mathcal{B},\tilde{\mathcal{K}}]$ can be written as linear combinations of the operators $\mathcal{L}$, $\mathcal{B}$ and $\tilde{\mathcal{K}}$ themselves.  These commutators can be expanded over the Krylov basis as follows:
\begin{equation} \label{commLK}
[\mathcal{L},\tilde{\mathcal{K}}]= 2\sum_{n=0}^{D-1} f(n) b_{n+1} [ \, \ked{\mathcal{O}_{n+1}}\brad{\mathcal{O}_n} - \ked{\mathcal{O}_{n}}\brad{\mathcal{O}_{n+1}} \,],
\end{equation}
\begin{equation} \label{commBK}
[\mathcal{B},\tilde{\mathcal{K}}]= 2\sum_{n=0}^{D-1} f(n) b_{n+1} [ \,\ked{\mathcal{O}_{n+1}}\brad{\mathcal{O}_n} + \ked{\mathcal{O}_{n}}\brad{\mathcal{O}_{n+1}} \,],
\end{equation}
where we have defined
\begin{equation} \label{f(n)}
f(n)=b^2_{n+1}-b^2_n- (b^2_{n+2}-b^2_{n+1})= \frac{\tilde{\mathcal{K}}_{nn}-\tilde{\mathcal{K}}_{n+1,n+1}}{2}.
\end{equation}
Now, it is clear that the commutator \eqref{commLK} between $\mathcal{L}$ and $\tilde{\mathcal{K}}$ cannot contain any element of the complexity algebra other than $\mathcal{B}=\sum_{n=0}^{D-1} b_{n+1} [ \, \ked{\mathcal{O}_{n+1}}\brad{\mathcal{O}_n} - \ked{\mathcal{O}_{n}}\brad{\mathcal{O}_{n+1}} \, ]$, while the commutator \eqref{commBK} can only contain $\mathcal{L}=\sum_{n=0}^{D-1} b_{n+1} [ \, \ked{\mathcal{O}_{n+1}}\brad{\mathcal{O}_n} + \ked{\mathcal{O}_{n}}\brad{\mathcal{O}_{n+1}} \, ]$. Moreover, the only possibility for the algebra to be closed is that the discrete function $f(n)$ is a constant. By looking at Eq.~\eqref{f(n)}, we conclude that $f(n)$ is constant if and only if
\begin{equation} \label{closure-condition}
2(b^2_{n+1}-b^2_n)=\alpha n+2\gamma,
\end{equation}
for some constants $\alpha $ and $\gamma$ (the factors $2$ are included for convenience). Again, $b_0$ has to be replaced with $0$, so that Eq.~\eqref{closure-condition} holds for $n\geq1$, while $2b_1^2=\alpha+2\gamma$. Then, the function $f(n)$ takes the constant value $f=-\alpha /2$, so that the only possible closure of the complexity algebra is given by:
\begin{equation}\label{algebra closuree}
[\mathcal{L},\mathcal{B}]=\tilde{\mathcal{K}}, \quad [\tilde{\mathcal{K}},\mathcal{L}]=\alpha \mathcal{B}, \quad [\tilde{\mathcal{K}},\mathcal{B}]=\alpha \mathcal{L}.
\end{equation}
Moreover, from Eq.~\eqref{closure-condition} we immediately conclude that
\begin{equation}
\mathcal{\tilde{K}}=\alpha\mathcal{K}+\gamma.
\end{equation}
Therefore, if $\alpha \neq 0$,  the Krylov complexity is related to $\mathcal{\tilde{K}}$ by a shift. Conversely, if $\alpha =0$ there is no simple relation between the Krylov complexity and the operator $\tilde{\mathcal{K}}$. In this case,  $\tilde{\mathcal{K}}$ is proportional to the identity and the complexity algebra  reduces to the Heisenberg-Weyl algebra \cite{perelomov1986}, being $[\mathcal{L}_+,\mathcal{L}_-]=\gamma \mathds{1}$.

\noindent\textbf{\textsf{\small Possible scenarios under the closure of the complexity algebra.}} 
As already discussed, if $\L$, $\B$ and their commutator $\Tilde{K}$ closes an algebra, then the only possible commutation relations are given by \eqref{algebra closuree}. This complexity algebra then reduced to the Heisenberg-Weyl algebra whenever $\alpha = 0$. We next show that for the cases $\alpha <0$ and $\alpha >0$, the complexity algebra  reduces to the $\textrm{SU}(2)$ algebra and the $\textrm{SL}(2,\mathbb{R})$ algebra, respectively. Let us introduce the operators $J_+$ and $J_-$, which are defined by $\nu J_+ = \L_+$ and $\nu J_- = \L_-$, where $\nu$ is a strictly positive scaling parameter. We can then write $\L = \nu(J_+ + J_-)$ and $\B = \nu(J_+ - J_-)$. Let us also introduce the operator $J_0$ defined by $J_0 = -\frac{1}{2\nu^2}\Tilde{K}$. By substituting these operators into \eqref{algebra closuree}, one can rewrite the commutation relations as
\begin{equation}
        [J_+, J_-] = J_0,\quad [J_0, J_\pm] = \mp\frac{\alpha}{2\nu^2}J_\pm.
\end{equation}
By choosing the scaling parameter such that $2\nu^2 = \alpha$, we find that the algebra \eqref{algebra closuree} is equivalent to 
\begin{align}
     &[J_+, J_-] = J_0,\quad [J_0, J_\pm] = \pm J_\pm &\alpha< 0\quad &\textrm{SU}(2),\\
     &[J_+, J_-] = J_0,\quad [J_0, J_\pm] = \mp J_\pm &\alpha> 0\quad &\textrm{SL}(2,\mathbb{R}).
\end{align}
What we have shown is that, whenever the simplicity hypothesis holds, then the algebra generated by $\L$, $\B$ and their commutator can always be reduced to either $\textrm{SU}(2)$, $\textrm{SL}(2,\mathbb{R})$ or the Heisenberg-Weyl algebra, and for which of these it reduces to depends on the value of $\alpha$. 

 \smallskip

\noindent{\textbf{\textsf{Data availability}}}\\
The datasets generated during and/or analysed during
the current study are available from the corresponding
author upon reasonable request.

\medskip

\noindent{\textbf{\textsf{Competing interests}}}\\
 The authors declare no competing interests.

\smallskip

\noindent{\textbf{\textsf{Code availability}}}\\
The codes generated and used during the current study
are available from the corresponding author on reason-
able request.

 \smallskip
 
\noindent{\textbf{\textsf{Online content}}}\\
Methods, additional references, supplementary information, are available.
 
 \smallskip
 
\noindent{\textbf{\textsf{Acknowledgements}}}\\ We are grateful to A. Chenu and J. Yang for insightful discussions. 

\smallskip

\noindent{\textbf{\textsf{Author contributions}}}\\
 N.H. and A.D.C introduced the dispersion bound. N.H. provided its geometric interpretation and  together with N.C. studied the conditions for its saturation. N.H. provided the estimation of the deviation time. N.C. prepared Figure 1 and analyzed the explicit models presented in the supplementary information. A.S.M.-R. found the differential equation for the Krylov complexity and performed the numerical analysis in GOE for Figure 2. A.D.C proposed the project, provided guidance and supervised the work. All the authors participated in the analysis of the results and the writing of the manuscript.

\smallskip

\smallskip
 

	
 \noindent{\textbf{\textsf{References}}}\\
 \bibliography{KCbound}

 \onecolumngrid
 \pagebreak
 \section*{Ultimate Speed Limits to the Growth of Operator Complexity\\---Supplementary Information---}

\section*{Supplementary note 1: Explicit models}

In this appendix we introduce three dynamical models that, having the structure of a closed complexity algebra, display a maximal growth of complexity, in the sense that the complexity rate saturates the dispersion bound. Moreover, we show that quantum chaos, in the Hamiltonian sense, is not necessary to have maximal complexity growth. In particular, it is shown that the dynamics of simple solvable Hamiltonians can saturate our bound.

We first consider a finite-dimensional model, namely the $\textrm{SU}(2)$ algebra, and then turn to the infinite-dimensional case, which allows us to comment on the famous conjecture by Parker \textit{et al.} \cite{parker2019}. In particular, we show that our notion of maximal complexity growth is more general than the one proposed in their work, as the latter represents a special case of the former.

\subsection*{\textrm{SU}(2) algebra}

Let us start with the $\textrm{SU}(2)$ algebra $[J_i,J_j]=i\epsilon_{ijk}J_k$. That is, let us consider the dynamical evolution generated by the Liouvillian
\begin{equation}
	\mathcal{L}=\nu(J_+ + J_-),
\end{equation}
where $J_{\pm}=J_1\pm iJ_2$ are the familiar $\textrm{SU}(2)$ ladder operators and $\mathcal{L_\pm}=\nu J_\pm$. Now, the Krylov basis corresponds to the usual basis of the representation $j$: $\ked{\mathcal{O}_n}=\ket{j,n}$ with $-j \leq n \leq j$. Following \cite{caputa2021}, let us relabel the vectors with $n\to n+j$, so that $n=0,\dots,2j$, the dimension of the Krylov space being equal to $2j+1$. By construction, the initial operator $\ked{\mathcal{O}_0}$ is just the highest weight state $\ket{j,-j}$ and is annihilated by $J_-$. From the action of the ladder operators on the representation basis:
\begin{eqnarray}
	J_+ \ket{j,-j+n}&=\sqrt{(n+1)(2j-n)} \ket{j,-j+n+1},\\
	J_- \ket{j,-j+n}&=\sqrt{n(2j-n+1)} \ket{j,-j+n-1},
\end{eqnarray}
being $\mathcal{L_-}\ked{\mathcal{O}_n}=b_n \ked{\mathcal{O}_{n-1}}$, we can read off the Lanczos coefficients:
\begin{equation} \label{bn-SU2}
	b_n=\nu \sqrt{n(2j+1-n)}.
\end{equation}
The Heisenberg evolution of an operator can be understood as the displacement of a generalized coherent state \cite{caputa2021}:
\begin{equation}
	\ked{\mathcal{O}(t)}=e^{i\mathcal{L}t}\ked{\mathcal{O}_0}=D(\xi=i\nu t)\ked{\mathcal{O}_0}.
\end{equation}
Indeed, the displacement operator is defined as
\begin{equation}
	D(\xi)=e^{\xi J_+ - \xi^* J_-}.
\end{equation}
The generalized coherent state $\ket{\xi,j}$ can be expanded over the spin basis $\ket{j,-j+n}$ as follows \cite{caputa2021}:
\begin{equation}
\ket{\xi,j}=(1+|\xi|^2)^{-j} \sum_{n=0}^{2j} \xi^n \sqrt{\frac{\Gamma(2j+1)}{n! \Gamma(2j-n+1)}} \ket{j,-j+n}.
\end{equation}
For this model it is convenient to use complex polar coordinates $\xi=e^{i\phi}\tan\theta $. Indeed, by replacing $\theta=\nu t$ and $\phi=\pi/2$ and by using the correspondence between the spin and the Krylov basis $\ked{\mathcal{O}_n}=\ket{j,-j+n}$, from above one can read the components of the operator wavefunction:
\begin{equation}
\phi_n(t)= \tan^n (\nu t) \cos^{2j} (\nu t) \sqrt{\frac{\Gamma(2j+1)}{n! \Gamma(2j-n+1)}},
\end{equation}
from which we can compute both the mean (i.e.~the Krylov complexity) and the variance of the complexity operator $\mathcal{K}$:
\begin{align}
K(t)&= \sum_{n=0}^{2j} n\, \phi^2_n(t)= 2j\sin^2 \nu t,\\
\Delta \mathcal{K}(t)&= \sqrt{\sum_{n=0}^{2j} n^2 \phi^2_n(t) -K^2}=  \sqrt{\frac{j}{2}}|\sin 2\nu t|,
\end{align}
Since $b_1=\nu\sqrt{2j}$, one can check that $| \partial K| = 2b_1 \Delta \mathcal{K}$ at any time t: that is, as expected from the closure of the 3-dimensional complexity algebra, the dispersion bound is identically saturated.

Given the expression of the Liouvillian in Krylov space, it is generally a difficult task to derive a corresponding Hamiltonian that generates the dynamics in the Hilbert space. In particular, the former contains less information than the latter and therefore many different Hamiltonians can give rise to the same dynamics in Krylov space. Moreover, one has not only to specify the Hamiltonian but also the initial operator $\mathcal{O}_0$. Nevertheless, we find that the evolution of the operator $\mathcal{O}_0=\sigma_1+\sigma_3$ under the single-qubit (two-level) Hamiltonian $H=\nu \sigma_3$, where $\sigma_i$ is the $i$-th Pauli matrix, is given in Krylov space by the representation $j=1$ of the $\textrm{SU}(2)$ algebra. More precisely, by explicitly performing the Lanczos algorithm, which in this case involves only two steps, we find Lanczos coefficients $b_1=b_2=\nu\sqrt{2}$, which coincide with Eq.~\eqref{bn-SU2} for $j=1$. We note that here the dimension of the Krylov space is $D=3$, which is the maximum allowed for a Hilbert dimension $d=2$, being $D\leq d^2-d+1$ \cite{rabinovici2021}. This is achieved due to the choice made for the initial operator $\mathcal{O}_0$, which has non-zero components along all the Liouvillian eigenspaces. If instead one starts with an initial operator $\mathcal{O}_0=\sigma_1$, the Krylov dimension shrinks to $D=2$, in which case the bound is always trivially saturated, being the complexity algebra given by the representation $j=1/2$ of $\textrm{SU}(2)$.
From this example, we deduce that non-chaotic Hamiltonian can give rise to maximal complexity growth in Krylov space. Interestingly, the same observation was made also in \cite{DymarskySmolkin21} with respect to the different notion of maximal complexity growth proposed by Parker \textit{et al.} \cite{parker2019}, proving that the exponential growth of complexity can be achieved also without chaos.

As a final remark, let us note that by considering a more general two-level Hamiltonian $H=c\mathds{1}+\overrightarrow{v}\cdot\overrightarrow{\sigma}$ we can still obtain the same dynamics in Krylov space, i.e.~representation $j=1$ of $\textrm{SU}(2)$, provided that we tune the parameters and we choose the initial operator in such a way that $b_1=b_2$: if this condition does not hold, the bound cannot be saturated. More generally, in any Krylov space of dimension $D=3$, the algebraic closure and thus the saturation of the bound is possible if and only if $b_1=b_2$, i.e.~ if and only if the underlying algebra is given by the representation $j=1$ of $\textrm{SU}(2)$. This can be checked by explicitly computing the double commutator $[ \mathcal{L},[\mathcal{\tilde{K}}, \mathcal{B} ]]$ and observing that it vanishes only if $b_1=b_2$. Indeed, as emphasized in the main text, the only possible closed complexity algebra in the case of a finite Krylov dimension $D$ is given, up to a multiplicative constant, by the representation $j=\frac{D-1}{2}$ of $\textrm{SU}(2)$.

\subsection*{Heisenberg-Weyl algebra}

Let us now consider the case of infinite-dimensional Krylov space. An emblematic example in which the bound is saturated is the one in which the dynamical evolution is given in terms of the Heisenberg-Weyl ($\textrm{HW}$) algebra $[a, a^\dagger] = 1$. In this case, the Liouvillian is given by
\begin{equation}
	\mathcal{L}=\nu(a^\dagger + a),
\end{equation}
and the generalized ladder operators $\mathcal{L}_\pm$ are just the raising and lowering operators $a^\dagger$ and $a$, times the constant $\nu$. Here the initial operator $\ked{\mathcal{O}_0}$ is represented as the vacuum state $\ket{0}$ and the Krylov basis corresponds to the usual basis constructed by acting with $a^\dagger$ on the vacuum:
\begin{equation}
	\ked{\mathcal{O}_n} = \ket{n} \frac{1}{\sqrt{n!}} (a^\dagger)^n \ket{0},
\end{equation}
that is, the eigenbasis of the number operator $a^\dagger a$, which coincides with the complexity operator $\mathcal{K}$. We note that in this case the Krylov space has infinite dimension. From the well known relations
\begin{equation}
a^\dagger \ket{n} = \sqrt{n+1} \ket{n+1}, \quad a\ket{n}=\sqrt{n} \ket{n-1},
\end{equation}
one can see that $b_n=\nu \sqrt{n}$. The time-evolved operator $\ked{\mathcal{O}(t)}$ can be represented as the standard coherent state
\begin{equation}
\ket{\xi}= D(\xi) \ket{0}= e^{-|\xi|^2/2} \sum_{n=0}^{\infty} \frac{\xi^n}{\sqrt{n!}} \ket{n}
\end{equation}
for $\xi = i\nu t$. Therefore, the components of the operator wavefunction are
\begin{equation}
\phi_n(t)= e^{-(\nu t)^2/2}  \frac{(\nu t)^n}{\sqrt{n!}},
\end{equation}
from which we can compute that
\begin{equation}
K(t)=(\Delta \mathcal{K})^2= \nu^2 t^2,
\end{equation}
We thus conclude that, being $b_1 = \nu$, the dispersion bound is always saturated: that is, $| \partial K| = 2b_1 \Delta \mathcal{K}$ $\forall t$. This model provides an example in which maximal complexity growth (in the sense of saturation of our bound) is achieved, while the conjecture by Parker \textit{et al.} \cite{parker2019}, i.e.~linear growth of Lanczos coefficients, does not hold. We therefore see that the two notions of maximal complexity growth are not equivalent.

\subsection*{SYK model}

Finally, let us consider the celebrated prototype for quantum chaos: the SYK model of $N$ Majorana fermions with $q$-body interaction, given by the Hamiltonian
\begin{equation}
H^{(q)}_{SYK}= i^{q/2} \sum_{1\leq i_1 < i_2<\dots <i_q \leq N} J_{i_1\dots i_q} \gamma_{i_1}\dots \gamma_{i_q}.
\end{equation}
In the large-$N$ limit the model can be solved analytically and, for asymptotically large $q$, has been proven to obey the universal growth hypothesis by Parker \textit{et al.} \cite{parker2019}: namely, the growth of the Lanczos coefficients is asymptotically linear in $n$, resulting in the exponential time-behaviour of Krylov complexity. More precisely, it can be shown that, in this limit, the SYK belongs to a family of exact solutions with Lanczos coefficients \cite{parker2019}
\begin{equation} \label{SYK-b_n}
b_n=\nu \sqrt{n(n-1+\eta)}
\end{equation}
and amplitudes
\begin{equation}
\phi_n(t)=\sqrt{ \frac{(\eta)_n}{n!} } \tanh^n(\nu t) \sech^\eta (\nu t),
\end{equation}
where $(\eta)_n=\eta(\eta+1)\dots (\eta+n-1)$ is the Pochhammer symbol. From these amplitudes one can extract the complexity $K(t)=\eta \sinh^2(\nu t)$, which, as expected from the asymptotic linear behaviour of the Lanczos coefficients, shows an asymptotic exponential growth.
Remarkably, the linear growth of the Lanczos coefficients is a sufficient (but not necessary, as shown above) condition for the saturation of the dispersion bound on complexity, as shown in Supplementary Figure \ref{SYKplot}. This saturation is due to the presence of an underlying complexity algebra: indeed, one of the main results of our work is the proof that the closure of the complexity algebra is both a sufficient and a necessary condition for the dispersion bound to be saturated. For this particular family of solutions, the underlying algebra is that of $\textrm{SL}(2, \mathbb{R})$ \cite{caputa2021}.

  \begin{figure}[t]
  \centering
\hspace*{-0.0 cm}
\includegraphics[scale=0.52]{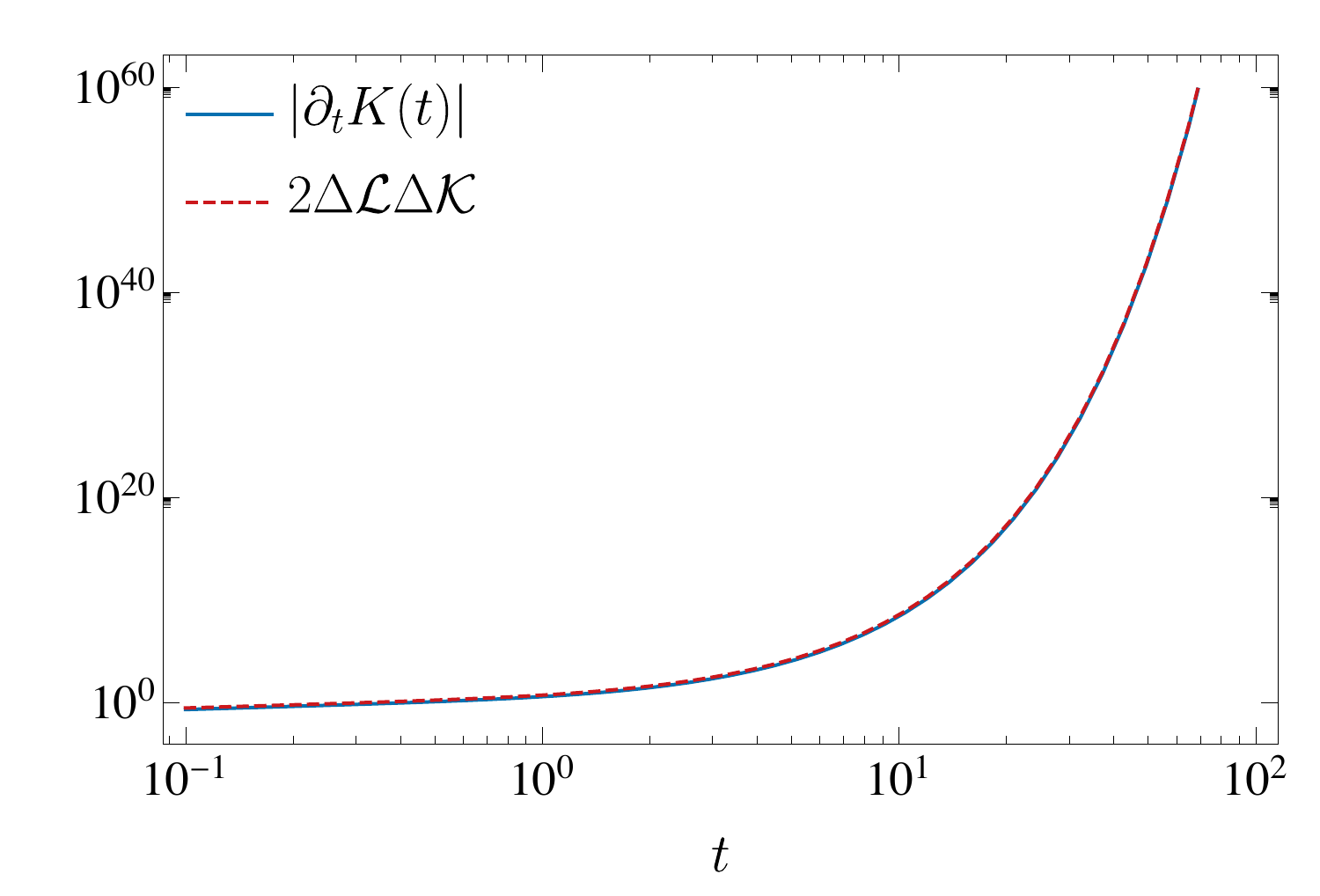}
  \caption{
  For the SYK model, the complexity rate (solid light blue line) saturates the dispersion bound (dashed red line) at any time. In the plot we fix the parameters appearing in Eq.~\eqref{SYK-b_n} to be $\nu=\eta=1$, that is we consider an exact linear growth of the Lanczos coefficients: $b_n=n$ for $n>1$.
   }
  \label{SYKplot}
    \end{figure} 

\section*{Supplementary note 2: Equivalence between the saturation of the dispersion bound and the simplicity hypothesis}

In this appendix, we show that there is an equivalence between the saturation of the dispersion bound and the simplicity hypothesis being satisfied. When we say that the complexity algebra is closed, we will simply mean that the simplicity hypothesis is satisfied.

The right-hand side of the dispersion bound is equal to two times the norm of the vectors $(\K - \expect{\K}_t)\OO(t)$ and $(\L - \expect{\L}_t)\OO(t)$, while the left-hand side is obtained by applying the Cauchy-Schwarz inequality. From this, it is clear that the bound is saturated if and only if the two vectors are linearly dependent. In other words, the bound is saturated if and only if the vectors $(\K -K)\ked{\OO(t)}$ and $\L\ked{\OO(t)}$ are linearly dependent, where we have chosen to suppress the time dependence of $K$. What will follow is a series of steps proving that the complexity algebra being closed is both necessary and sufficient for the vectors $(\K -K)\ked{\OO(t)}$ and $\L\ked{\OO(t)}$ to be linearly dependent. Said differently, the complexity algebra being closed is equivalent to the dispersion bound being saturated. When carrying out the proofs, we will use the convention that $b_0 = 0$ and for finite Krylov dimension $D$, we will also introduce $b_D=0$. For any superoperator $\M$ we will write $\M_{n,m} \equiv \braked{\OO_n}{\M\OO_m}$, where $\M_{n,m}$ can be thought of as the entries of a matrix representing $\M$.

\subsection*{Proving necessity}

Linear dependence between $(\K -K)\ked{\OO(t)}$ and $\L\ked{\OO(t)}$ is equivalent with linear dependence between $e^{-it\L}(\K -K)e^{it\L}\OO$ and $\ked{\OO_1}$. To simplify, we will use the notation $\mathbb{L}^n$ to mean $[\L,\cdot]$ applied to $\K$ $n$ times. By Taylor expanding the vector $e^{-it\L}(\K -K)e^{it\L}\ked{\OO}$ at $t=0$, we have that
\begin{equation}
    e^{-it\L}(\K-K)e^{it\L}\ked{\OO} = \Big(-K + \sum_{n=1}^{\infty}\frac{1}{n!}(-i)^n\mathbb{L}^nt^n\Big)\ked{\OO}.
\end{equation}
It is clear that $\mathbb{L} = \L_- -\L_+$ while $\mathbb{L}^2 = 2[\L_+,\L_-]$ is diagonal in the Krylov basis with eigenvalues $(\mathbb{L}^2)_{n,n} = -2(b_{n+1}^2 - b_n^2)$. Applying $[\L,\cdot]$ once more, one finds that $\mathbb{L}^3$ consists only of a subdiagonal and superdiagonal with values given by $(\mathbb{L}^3)_{n+1,n} = -(\mathbb{L}^3)_{n,n+1} = -2b_{n+1}f(n)$, where $f$ is the discrete function defined by $f(n) = (b_{n+2}^2 - b_{n+1}^2)-(b_{n+1}^2 - b_n^2)$. By the $k$-diagonal of a matrix, we mean the diagonal of the matrix going top-left to bottom-right direction where $k$ is an offset from the main diagonal. We use the convention that $k=0$ is the main diagonal while $k=1$ and $k=-1$ are the superdiagonal and subdiagonal respectively, and so on. From the form of $\mathbb{L}^3$, it should be clear that $k$-diagonals of $\mathbb{L}^{n+3}$ for which $\abs{k}>1+n$ must only consist of zero-valued entries. Consequently, we must have that $(\mathbb{L}^{n+4})_{n+m+2,m} = [\L_+,\mathbb{L}^{n+3}]_{n+m+2,m}$ which more explicitly can be written as the recursion relation $(\mathbb{L}^{n+4})_{n+m+2,m} = b_{n+m+3}(\mathbb{L}^{n+3})_{n+m+1,m}-b_{m+1}(\mathbb{L}^{n+3})_{n+m+2,m+1}$. To simplify some notation, we will write $L(n,m) \equiv (\mathbb{L}^{n+4})_{n+m+2,m}$ and the recursion relation can then be written as $L(n,m) = b_{n+m+2}L(n-1,m) - b_{m+1}L(n-1,m+1)$ for $n>0$. 

We now observe that the following proposition must be true:
\begin{prop}
\label{prop1}
The condition: $L(n,0)=0$ $\forall$ $0\leq n\leq D-3$, is a necessary condition for the vector $e^{-it\L}(\K-K)e^{it\L}\ked{\OO}$ to be linearly dependent of $\ked{\OO_1}$, and  therefore, a necessary condition for the dispersion bound to be satisfied.
\end{prop}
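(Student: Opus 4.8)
The plan is to read off $L(n,0)$ as the lowest-order Taylor coefficient in $t$ of a Krylov-basis component of the vector $V(t):=e^{-it\L}(\K-K)e^{it\L}\ked{\OO}$ that is forced to vanish identically once the bound is saturated. First I would invoke the equivalence recorded just above: the dispersion bound is saturated for all $t$ if and only if $V(t)$ is proportional to $\ked{\OO_1}$ for every $t$ (since $\ked{\OO_1}\neq 0$, linear dependence of the pair means that $V(t)$ is a scalar multiple of $\ked{\OO_1}$). Using the displayed Taylor expansion, $V(t)=-K(t)\ked{\OO_0}+\sum_{N\geq 1}\tfrac{(-i)^N}{N!}\,t^N\,\mathbb{L}^N\ked{\OO_0}$, I would note that the $-K(t)\ked{\OO_0}$ term affects only the $\ked{\OO_0}$-component of $V(t)$ --- in fact it cancels it exactly, since self-adjointness of $\L$ gives $\brad{\OO_0}e^{-it\L}\K e^{it\L}\ked{\OO_0}=\braked{\OO(t)}{\K\OO(t)}=K(t)$. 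Consequently $V(t)\parallel\ked{\OO_1}$ is equivalent to demanding, for every Krylov index $m$ with $2\leq m\leq D-1$, that the component
\begin{equation}
w_m(t):=\brad{\OO_m}V(t)=\sum_{N\geq 1}\frac{(-i)^N}{N!}\,t^N\,(\mathbb{L}^N)_{m,0}
\end{equation}
vanish identically in $t$, hence that each of its Taylor coefficients vanish.

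Next I would invoke the band structure of the iterated brackets established just above, namely that the $k$-diagonals of $\mathbb{L}^{n+3}$ with $|k|>n+1$ vanish; equivalently, $(\mathbb{L}^N)_{i,j}=0$ whenever $|i-j|>N-2$ for $N\geq 2$. Applied to the zeroth column this gives $(\mathbb{L}^N)_{m,0}=0$ for all $N\leq m+1$, so the lowest-order contribution to $w_m(t)$ is the $N=m+2$ term, and by the definition $L(n,m')\equiv(\mathbb{L}^{n+4})_{n+m'+2,m'}$ this entry is precisely $(\mathbb{L}^{m+2})_{m,0}=L(m-2,0)$. Writing $m=n+2$, we thus obtain
\begin{equation}
w_{n+2}(t)=\frac{(-i)^{n+4}}{(n+4)!}\,L(n,0)\,t^{n+4}+O(t^{n+5}),
\end{equation}
so the leading Taylor coefficient of $w_{n+2}$ is a nonzero multiple of $L(n,0)$.

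Finally, if the dispersion bound is saturated then $w_{n+2}(t)\equiv 0$ for every $n$ with $2\leq n+2\leq D-1$, that is for $0\leq n\leq D-3$; vanishing of the leading coefficient then forces $L(n,0)=0$ on that range, which is the assertion, and by the equivalence above the same condition is necessary for the parallelism with $\ked{\OO_1}$ in the proposition. I expect the only delicate step to be the band-structure input --- being certain that $N=m+2$ is genuinely the first order at which $w_m$ can be nonzero, and that the coefficient there is exactly $L(m-2,0)$ with a nonvanishing numerical prefactor. Since the width estimate on the diagonals of $\mathbb{L}^{n+3}$ is already proved in the text preceding the statement, this reduces to bookkeeping of Taylor orders rather than a new argument. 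The converse, that the conditions $L(n,0)=0$ for all $n$ are also sufficient, is the genuinely harder direction and lies outside this proposition.
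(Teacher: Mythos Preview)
Your proposal is correct and follows precisely the approach the paper intends: in the paper, Proposition~\ref{prop1} is stated as an ``observation'' immediately after the Taylor expansion, the band-structure fact on the $k$-diagonals of $\mathbb{L}^{n+3}$, and the definition $L(n,m)\equiv(\mathbb{L}^{n+4})_{n+m+2,m}$ have been set up --- no separate proof is written. You have simply made the implicit argument explicit: for each $m\geq 2$ the component $\brad{\OO_m}V(t)$ is a power series in $t$ whose first possibly nonzero coefficient (at order $t^{m+2}$) is a nonzero multiple of $L(m-2,0)$, so identical vanishing of these components forces $L(n,0)=0$ for $0\leq n\leq D-3$. The bookkeeping (including that the $-K(t)\ked{\OO_0}$ term is irrelevant for $m\geq 2$) is all correct.
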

By applying $[\L,\cdot]$ to $\mathbb{L}^3$, one finds that $L(0,m) = 2b_{m+1}b_{m+2}g(m)$, where we have defined $g(m) = f(m)-f(m+1)$. We will show that the condition $L(n,0) = 0$ $\forall$ $0\leq n\leq D-3$ is equivalent to the complexity algebra being closed. Together with Proposition \ref{prop1}, this would then prove that the algebra being closed is a necessary condition for saturation of the dispersion bound. In order to prove this however, we will first prove another proposition.

\begin{lem}
Consider the discrete function $L(n,m)$ where $0\leq n \leq D-3$ and $0 \leq m\leq D-1$. The recursion relation $L(n,m) = b_{n+m+2}L(n-1,m) - b_{m+1}L(n-1,m+1)$ together with the initial condition $L(0,m) = 2b_{m+1}b_{m+2}g(m)$ implies:
\begin{equation}
    L(n,m) = 2\prod_{j=1+m}^{2+n+m}b_j\sum_{k=0}^n(-1)^k\binom{n}{k}g(m+k).
\end{equation}
\end{lem}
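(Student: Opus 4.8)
The plan is to prove the closed-form expression for $L(n,m)$ by induction on $n$, with $m$ treated as a free parameter ranging over $0\leq m\leq D-1$. The base case $n=0$ is exactly the given initial condition $L(0,m) = 2b_{m+1}b_{m+2}g(m)$, which matches the claimed formula since the product $\prod_{j=1+m}^{2+m}b_j = b_{m+1}b_{m+2}$ and the sum $\sum_{k=0}^0(-1)^k\binom{0}{k}g(m+k) = g(m)$. So the base case is immediate.

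For the inductive step, I would assume the formula holds for $n-1$ (for all relevant $m$) and plug it into the recursion relation $L(n,m) = b_{n+m+2}L(n-1,m) - b_{m+1}L(n-1,m+1)$. Substituting the inductive hypothesis gives
\begin{equation}
L(n,m) = b_{n+m+2}\cdot 2\prod_{j=1+m}^{1+n+m}b_j\sum_{k=0}^{n-1}(-1)^k\binom{n-1}{k}g(m+k) - b_{m+1}\cdot 2\prod_{j=2+m}^{2+n+m}b_j\sum_{k=0}^{n-1}(-1)^k\binom{n-1}{k}g(m+1+k).
\end{equation}
The key observation is that both product factors equal $\prod_{j=1+m}^{2+n+m}b_j$: in the first term $b_{n+m+2}\prod_{j=1+m}^{1+n+m}b_j = \prod_{j=1+m}^{2+n+m}b_j$, and in the second term $b_{m+1}\prod_{j=2+m}^{2+n+m}b_j = \prod_{j=1+m}^{2+n+m}b_j$. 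So the common factor $2\prod_{j=1+m}^{2+n+m}b_j$ pulls out and we are left with showing that the two sums combine into $\sum_{k=0}^n(-1)^k\binom{n}{k}g(m+k)$.

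The remaining identity is the standard Pascal-type recursion for binomial coefficients: reindexing the second sum by $k\mapsto k-1$ turns it into $-\sum_{k=1}^n(-1)^{k-1}\binom{n-1}{k-1}g(m+k) = \sum_{k=1}^n(-1)^k\binom{n-1}{k-1}g(m+k)$, and adding this to the first sum $\sum_{k=0}^{n-1}(-1)^k\binom{n-1}{k}g(m+k)$ gives $\sum_{k=0}^n(-1)^k\bigl[\binom{n-1}{k}+\binom{n-1}{k-1}\bigr]g(m+k) = \sum_{k=0}^n(-1)^k\binom{n}{k}g(m+k)$, using $\binom{n-1}{n}=\binom{n-1}{-1}=0$ to absorb the boundary terms. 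This completes the induction. The only thing to be slightly careful about is the range of $m$: the recursion and inductive hypothesis are invoked at $m$ and $m+1$, so one should check the index bounds (in particular that $g(m+1)$ and the shifted products stay within the declared ranges $0\leq n\leq D-3$, $0\leq m\leq D-1$ with the convention $b_0=b_D=0$), but since $g$ is defined in terms of $f$ which is defined for all relevant indices with the boundary conventions already fixed, this is routine bookkeeping rather than a genuine obstacle. The main (mild) obstacle is just organizing the product-factor matching and the Pascal identity cleanly; there is no deep difficulty.
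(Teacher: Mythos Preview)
Your proof is correct and follows essentially the same route as the paper: induction on $n$, substituting the inductive hypothesis into the recursion, factoring out the common product $\prod_{j=1+m}^{2+n+m}b_j$, reindexing the second sum, and closing with Pascal's identity. The only cosmetic difference is that the paper takes $n=1$ as the base case while you take $n=0$; both are fine.
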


\begin{proof}
We prove this by using mathematical induction. For the base case we have that
\begin{equation}
\begin{split}
    L(1,m) &=
    b_{m+3}L(0,m) - b_{m+1}L(0,m+1)\\
    &= b_{m+3}\big(2b_{m+1}b_{m+2}g(m)\big)-b_{m+1}\big(2b_{m+2}b_{m+3}g(m+1)\big)\\
    &= 2b_{m+1}b_{m+2}b_{m+3}\big(g(m)-g(m+1)\big).
\end{split}
\end{equation}
For the inductive step we have
\begin{equation}
    \begin{split}
        L(n,m) &= b_{n+m+2}L(n-1,m) - b_{m+1}L(n-1,m+1)\\
        &= 2\prod_{j=m+1}^{n+m+2}b_j\Bigg(\sum_{k=0}^{n-1}(-1)^k\binom{n-1}{k}g(m+k)-\sum_{k=0}^{n-1}(-1)^k\binom{n-1}{k}g(m+1+k)\Bigg)\\
        &= 
        2\prod_{j=m+1}^{n+m+2}b_j\Bigg(\sum_{k=0}^{n-1}(-1)^k\binom{n-1}{k}g(m+k)+\sum_{k=1}^{n}(-1)^k\binom{n-1}{k-1}g(m+k)\Bigg)\\
        &=
        2\prod_{j=m+1}^{n+m+2}b_j\Bigg(g(m)+(-1)^{n}g(m+n)+\sum_{k=1}^{n-1}(-1)^k\Bigg[\binom{n-1}{k-1}+\binom{n-1}{k}\Bigg]g(m+k)\Bigg)\\
        &=
        2\prod_{j=m+1}^{n+m+2}b_j\sum_{k=0}^{n}(-1)^k\binom{n}{k}g(m+k),
    \end{split}
\end{equation}
where, in obtaining the second last line, we have made use of the binomial identity $\binom{n}{k} = \binom{n-1}{k-1} + \binom{n-1}{k}$.
\end{proof}

\begin{cor}
\label{cor1}
    $ L(n,0) = 2\prod_{j=1}^{2+n}b_j\sum_{k=0}^n(-1)^k\binom{n}{k}g(k)$.
\end{cor}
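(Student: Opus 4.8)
The plan is to derive this identity as the immediate $m=0$ specialization of the preceding Lemma. Setting $m=0$ in the Lemma's formula $L(n,m) = 2\prod_{j=1+m}^{2+n+m}b_j\sum_{k=0}^{n}(-1)^k\binom{n}{k}g(m+k)$, the product range $1+m\le j\le 2+n+m$ collapses to $1\le j\le 2+n$ and the shifted argument $g(m+k)$ becomes $g(k)$, which gives exactly $L(n,0) = 2\prod_{j=1}^{2+n}b_j\sum_{k=0}^{n}(-1)^k\binom{n}{k}g(k)$. No further calculation is involved; one only has to note that $m=0$ lies in the range $0\le m\le D-1$ on which the Lemma was established, and that the resulting claim is asserted precisely on $0\le n\le D-3$, the range in which $L(n,0) = (\mathbb{L}^{n+4})_{n+2,0}$ is a genuine entry of the $D\times D$ matrix of $\mathbb{L}^{n+4}$ in the Krylov basis.

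Consequently there is no real obstacle here: the substantive content was already carried out in proving the Lemma by induction, using the recursion $L(n,m)=b_{n+m+2}L(n-1,m)-b_{m+1}L(n-1,m+1)$ and the binomial identity $\binom{n}{k}=\binom{n-1}{k-1}+\binom{n-1}{k}$, together with the base value $L(0,m)=2b_{m+1}b_{m+2}g(m)$ obtained by applying $[\L,\cdot]$ to $\mathbb{L}^3$. Looking ahead, this closed form is exactly what turns the necessary condition of Proposition \ref{prop1}, namely $L(n,0)=0$ for all $0\le n\le D-3$, into a statement about the discrete function $g$ alone: since each Lanczos coefficient $b_j$ with $1\le j\le D-1$ is strictly positive, the prefactors $\prod_{j=1}^{2+n}b_j$ never vanish, so that condition is equivalent to $\sum_{k=0}^{n}(-1)^k\binom{n}{k}g(k)=0$ for all such $n$. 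A triangular induction on $n$ (starting from $g(0)=0$) then forces $g(k)=0$ for every $0\le k\le D-3$, i.e.\ $f$ is constant, which is precisely the closure condition \eqref{closure-condition} for the complexity algebra. In this way the corollary serves as the bridge between the linear-dependence requirement for saturating the dispersion bound and the simplicity hypothesis.
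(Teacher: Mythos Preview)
Your proposal is correct and matches the paper's approach: the corollary is stated without proof precisely because it is the immediate $m=0$ specialization of the preceding Lemma, exactly as you say. The additional ``looking ahead'' paragraph is accurate commentary on the role the corollary plays in Proposition~\ref{prop2}, though it goes beyond what is needed to justify the corollary itself.
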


\begin{prop}
\label{prop2}
    $L(n,0)=0$ $\forall$ $0\leq n\leq D-3$ $\Leftrightarrow$ $g(n)=0$ $\forall$ $0\leq n\leq D-3$.
\end{prop}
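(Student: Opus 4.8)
The plan is to read the equivalence directly off Corollary~\ref{cor1}, which writes
\[
L(n,0) = 2\Big(\prod_{j=1}^{2+n}b_j\Big)\sum_{k=0}^n(-1)^k\binom{n}{k}g(k),
\]
i.e.\ $L(n,0)$ is a nonzero multiple of the finite binomial transform of the sequence $\big(g(k)\big)$. The reverse implication is then immediate: if $g(k)=0$ for all $0\le k\le D-3$, then for each $n$ in that range every summand of $\sum_{k=0}^{n}(-1)^k\binom{n}{k}g(k)$ vanishes (since $k\le n\le D-3$), whence $L(n,0)=0$.

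For the forward implication I would first record that the Lanczos coefficients $b_j$ are strictly positive for $1\le j\le D-1$: each $b_j=\norm{A_j}$ is the norm of a nonzero vector, the Lanczos algorithm not having terminated before reaching Krylov dimension $D$. Since $n\le D-3$ forces $2+n\le D-1$, the prefactor $\prod_{j=1}^{2+n}b_j$ is nonzero throughout the relevant range. Hence, under the hypothesis $L(n,0)=0$ for all $0\le n\le D-3$, we get $\sum_{k=0}^{n}(-1)^k\binom{n}{k}g(k)=0$ for all $0\le n\le D-3$.

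It then remains to invert the binomial transform, which I would do by induction on $n$. The base case $n=0$ gives $g(0)=0$ directly. For the inductive step, assuming $g(k)=0$ for $0\le k\le n-1$, the vanishing sum collapses to $(-1)^n\binom{n}{n}g(n)=(-1)^n g(n)=0$, so $g(n)=0$. (Equivalently, one may observe that the linear map sending $\big(g(0),\dots,g(D-3)\big)$ to the vector of its binomial transforms is represented by a unitriangular matrix with entries $\pm1$ on the diagonal, hence is invertible and has trivial kernel.) Combining the two directions proves Proposition~\ref{prop2}.

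I do not anticipate a real obstacle here; the only point needing care is justifying that all $b_j$ occurring in the prefactor are nonzero, which is precisely guaranteed by the cutoff $n\le D-3$ (so that $2+n\le D-1$) — presumably the reason the range is stated as $0\le n\le D-3$ rather than up to $D-1$.
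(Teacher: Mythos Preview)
Your proof is correct and follows the same linear-algebraic route as the paper: both use Corollary~\ref{cor1} to read the vector $\big(L(n,0)\big)$ as an invertible (triangular) linear transform of the vector $\big(g(k)\big)$ and conclude that one vanishes identically iff the other does. Your version is more explicit than the paper's---particularly in justifying that the prefactor $\prod_{j=1}^{2+n} b_j$ is nonzero because $b_j>0$ for $1\le j\le D-1$, which is indeed the reason the range stops at $n=D-3$.
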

\begin{proof}
    We can think of the set of functions $g(n)$ as spanning a subset of $\mathbb{R}^{D-3}$. It should then be clear from Corollary \ref{cor1} that the set of functions $L(n,0)$ must then have the same span. This means that we can express each $g(n)$ as a linear combination of the functions $L(n,0)$ or vice versa. Equating each function $L(n,0)$ ($g(n)$) with zero then results in $g(n) = 0$ ($L(n,0)=0$) for all $0\leq n\leq D-3$.
\end{proof}
We are now ready to prove the following proposition:
\begin{prop}
\label{prop nec}
The saturation of the dispersion bound implies that the complexity algebra is closed.
\end{prop}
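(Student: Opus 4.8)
The plan is simply to concatenate the results established immediately above. First I would invoke Proposition~\ref{prop1}: if the dispersion bound is saturated, i.e.\ if $(\K-K)\ked{\OO(t)}$ and $\L\ked{\OO(t)}$ are linearly dependent for all $t$, then necessarily $L(n,0)=0$ for every $0\leq n\leq D-3$. Then I would apply Proposition~\ref{prop2} to translate this into $g(n)=0$ for every $0\leq n\leq D-3$, where $g(m)=f(m)-f(m+1)$ and $f(n)=(b_{n+2}^2-b_{n+1}^2)-(b_{n+1}^2-b_n^2)$.

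Next I would observe that $g(n)=0$ on the range $0\leq n\leq D-3$ says precisely that $f$ takes a single constant value on $\{0,1,\dots,D-2\}$; denote this common value by $-\alpha/2$. This is exactly the closure condition \eqref{closure-condition}, equivalently $2(b_{n+1}^2-b_n^2)=\alpha n+2\gamma$, under the standing conventions $b_0=0$ and, for finite $D$, $b_D=0$. Substituting $f(n)\equiv-\alpha/2$ into Eqs.~\eqref{commLK} and \eqref{commBK} collapses them to $[\tilde{\K},\L]=\alpha\B$ and $[\tilde{\K},\B]=\alpha\L$, so that $\L$, $\B$ and $\tilde{\K}=[\L,\B]$ satisfy the three commutation relations \eqref{algebra closuree} and hence close the complexity algebra. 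This is the simplicity hypothesis, which is what the proposition asserts.

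The only point that requires real care is the boundary of the Krylov lattice in the finite-dimensional case: one must check that constancy of $f$ on exactly the set $\{0,\dots,D-2\}$ delivered by Propositions~\ref{prop1} and \ref{prop2} is enough to force closure, with no additional constraint at the last site. This works because the sums in \eqref{commLK}--\eqref{commBK} nominally run to $n=D-1$, but that term carries the factor $b_D=0$ and drops out, so only $f(0),\dots,f(D-2)$ actually enter the commutators; the same bookkeeping (with $b_0=0$) covers $D=\infty$, where the range is all non-negative integers. I expect this matching of ranges to be the only genuine subtlety; the remainder is a direct composition of Proposition~\ref{prop1}, Proposition~\ref{prop2}, and the constancy-of-$f$ characterization of algebraic closure established in the Methods.
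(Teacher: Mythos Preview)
Your proposal is correct and follows essentially the same route as the paper: invoke Proposition~\ref{prop1} and Proposition~\ref{prop2} to obtain $g(n)=0$ on $\{0,\dots,D-3\}$, then observe this is equivalent to $f$ being constant on $\{0,\dots,D-2\}$, which is exactly the characterization of closure. The only differences are cosmetic: you spell out the boundary bookkeeping at $n=D-1$ (which the paper leaves implicit), and there is a harmless sign slip---with the Supplementary-Note definition $f(n)=(b_{n+2}^2-b_{n+1}^2)-(b_{n+1}^2-b_n^2)$ the constant comes out to $+\alpha/2$, not $-\alpha/2$---but this does not affect the argument.
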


\begin{proof}
We have that $\B \equiv \mathbb{L}$ and $\Tilde{\K} \equiv \mathbb{L}^2$ and the complexity algebra is closed per definition if and only if $\mathbb{L}^3 = [\L,\Tilde{K}]$ can only be written as a linear combination of $\L$, $\B$ and $\Tilde{\K}$. It should be clear that this is possible if and only if $f(n)=C$ $\forall$ $0\leq n\leq D-2$, where $C\in\mathbb{R}$. This is clearly equivalent to the condition $g(n) = 0$ $\forall$ $0\leq n\leq D-3$, which together with Proposition \ref{prop1} and \ref{prop2} is implied by saturation of the dispersion bound.
\end{proof}

\subsection*{Proving sufficiency}

As we pointed out in the proof of Proposition \ref{prop nec}, the complexity algebra being closed is equivalent with $f(n)=C$ $\forall$ $0\leq n\leq D-2$, where $C\in\mathbb{R}$. We note that
\begin{align}
\label{appendix f(n)}
    f(n)=C\textrm{ }\forall\textrm{ }0\leq n\leq D-2\textrm{ }&\Leftrightarrow\textrm{ }  2(b_{n+1}^2-b_n^2) = \alpha n + \gamma \quad\forall\textrm{ }0\leq n\leq D-1\\
    &\label{appendix b_n}\Leftrightarrow\textrm{ }b_{n} = \sqrt{\frac{1}{4}\alpha n(n-1)+\frac{1}{2}\gamma n + \delta} \quad\forall\textrm{ }0\leq n\leq D,
\end{align}
where $\alpha$, $\gamma$ and $\delta$ are real constants and $C=\frac{1}{2}\alpha$. We stress that \eqref{appendix b_n} holds under the convention that $b_0 = 0$, and we note that this implies that $\delta = 0$ and so we must have
\begin{equation}
    b_{n} = \sqrt{\frac{1}{4}\alpha n(n-1)+\frac{1}{2}\gamma n} \quad\forall\textrm{ }1\leq n\leq D.
\end{equation}
The right hand side of the equivalence sign in \eqref{appendix f(n)} is equivalent to $\Tilde{K} = \alpha\K +\gamma$. Consequently, the closed complexity algebra is entirely determined by the commutation relations $[\K,\L] = \B$, $[\K,\B] = \L$ and $[\L,\B] = \alpha\K + \gamma$.

\begin{lem}
\label{lemma 2}
The complexity algebra being closed implies that $\mathbb{L}^{2n} = (-1)^n\frac{1}{\alpha}(\sqrt{\alpha})^{2n}(\alpha\K + \gamma)$ and $\mathbb{L}^{2n+1} = (-1)^{n+1}\frac{1}{\sqrt{\alpha}}(\sqrt{\alpha})^{2n+1}\B$ when $\alpha\neq 0$ and $\mathbb{L} = -\B$, $\mathbb{L}^2 = -\gamma$ and $\mathbb{L}^n=0$ for $n>2$ when $\alpha = 0$.
\end{lem}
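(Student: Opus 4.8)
The plan is to prove Lemma~\ref{lemma 2} by a short two-step induction on the number of times $[\L,\cdot]$ has been applied, using as the only structural input the fact that closure of the complexity algebra is equivalent, by \eqref{appendix f(n)}, to the commutation relations $[\K,\L]=\B$, $[\K,\B]=\L$ and $[\L,\B]=\alpha\K+\gamma$. First I would record the base cases. As already noted, $\mathbb{L}^{1}=[\L,\K]=\L_--\L_+=-\B$, and therefore $\mathbb{L}^{2}=[\L,\mathbb{L}^{1}]=-[\L,\B]=-(\alpha\K+\gamma)$; these agree with the claimed expressions at $n=0$ in the odd family and at $n=1$ in the even family, since $(-1)^{1}\tfrac{1}{\sqrt\alpha}(\sqrt\alpha)^{1}=-1$ and $(-1)^{1}\tfrac{1}{\alpha}(\sqrt\alpha)^{2}=-1$. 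The case $\alpha=0$ is then immediate: there $[\L,\B]=\gamma$ is a multiple of the identity, so $\mathbb{L}^{2}=-\gamma$ and $\mathbb{L}^{3}=[\L,-\gamma]=0$, and since $[\L,\cdot]$ annihilates the zero superoperator, $\mathbb{L}^{n}=0$ for all $n>2$, which is precisely the Heisenberg--Weyl branch of the statement.

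For $\alpha\neq0$ I would carry the induction in the equivalent compact form $\mathbb{L}^{2n}=(-1)^{n}\alpha^{\,n-1}(\alpha\K+\gamma)$ for $n\geq1$ and $\mathbb{L}^{2n+1}=(-1)^{n+1}\alpha^{\,n}\B$ for $n\geq0$. Assuming the even formula holds at some $n\geq1$, applying $[\L,\cdot]$ and using $[\L,\K]=-\B$ gives $\mathbb{L}^{2n+1}=(-1)^{n}\alpha^{\,n-1}\cdot\alpha\cdot(-\B)=(-1)^{n+1}\alpha^{\,n}\B$; assuming the odd formula holds at some $n\geq0$, applying $[\L,\cdot]$ once more and using $[\L,\B]=\alpha\K+\gamma$ gives $\mathbb{L}^{2n+2}=(-1)^{n+1}\alpha^{\,n}(\alpha\K+\gamma)=(-1)^{n+1}\alpha^{(n+1)-1}(\alpha\K+\gamma)$, which is the even formula at $n+1$. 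Substituting back $\alpha^{\,n-1}=\tfrac{1}{\alpha}(\sqrt\alpha)^{2n}$ and $\alpha^{\,n}=\tfrac{1}{\sqrt\alpha}(\sqrt\alpha)^{2n+1}$ recovers the form stated in the lemma.

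I do not expect a genuine obstacle: the only non-bookkeeping ingredient is the identity $[\L,\B]=\alpha\K+\gamma$, which is nothing but the closure condition $2(b_{n+1}^{2}-b_{n}^{2})=\alpha n+\gamma$ re-expressed, together with the elementary fact that $[\L,\K]=-\B$, valid because $\L$ is tridiagonal and $\K$ diagonal in the Krylov basis. The only places that need care are keeping the parity and sign factors $(-1)^{n}$ straight, and noting that the even expression is meant for $n\geq1$ only, so that it is not being applied to $\mathbb{L}^{0}=\K$ (for which $n=0$ would give the spurious $\K+\gamma/\alpha$); I would state this range restriction explicitly.
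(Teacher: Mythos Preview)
Your proof is correct and follows essentially the same route as the paper: both arguments verify the low-order cases $\mathbb{L}^{1}=-\B$, $\mathbb{L}^{2}=-(\alpha\K+\gamma)$ directly from the closure relations and then proceed by induction on the number of applications of $[\L,\cdot]$, with the $\alpha=0$ branch handled separately as trivial. The only cosmetic difference is that the paper takes $\mathbb{L}^{2}$ and $\mathbb{L}^{3}$ as its base and advances the even case two commutators at a time, whereas you alternate one step at a time starting from $\mathbb{L}^{1}$; your explicit remark that the even formula applies only for $n\geq1$ is a useful clarification not spelled out in the paper.
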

\begin{proof}
The case for when $\alpha=0$ is trivial while for $\alpha\neq 0$ we will use mathematical induction.
For the base case we have $\mathbb{L}^2 = [\L,[\L,\K]] = -[\L,\B] = -(\alpha\K + \gamma)$ and $\mathbb{L}^3 = [\L, -(\alpha\K + \gamma)] = \alpha\B$. For the inductive step, we have $\mathbb{L}^{2n} = [\L,[\L,\mathbb{L}^{2(n-1)}]] = (-1)^{n-1}(\sqrt{\alpha})^{2(n-1)}[\L,[\L,\K]] = (-1)^n\frac{1}{\alpha}(\sqrt{\alpha})^{2n}(\alpha\K + \gamma)$ and $\mathbb{L}^{2n+1} = [\L,\mathbb{L}^{2n}] = (-1)^{n+1}\frac{1}{\sqrt{\alpha}}(\sqrt{\alpha})^{2n+1}\B$.

\end{proof}

\begin{prop}
\label{prop suff}
The complexity algebra being closed implies that the dispersion bound is saturated.
\end{prop}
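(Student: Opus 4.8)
The plan is to lean on the reformulation already established earlier in this appendix: the dispersion bound is saturated at time $t$ if and only if $(\K - K)\ked{\OO(t)}$ and $\L\ked{\OO(t)}$ are linearly dependent, equivalently (conjugating by $e^{-it\L}$ and using $\L\ked{\OO_0}=b_1\ked{\OO_1}$) if and only if $e^{-it\L}(\K - K)e^{it\L}\ked{\OO_0}$ is a scalar multiple of $\ked{\OO_1}$; here $\ked{\OO}=\ked{\OO_0}$ is normalized and $K=K(t)$. Since we are now assuming the complexity algebra is closed, the nested commutators $\mathbb{L}^n\K$ are completely pinned down by Lemma~\ref{lemma 2}, so the whole claim collapses to a Taylor expansion in $t$.

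First I would expand, exactly as in the necessity argument, $e^{-it\L}\K e^{it\L}\ked{\OO_0} = \sum_{n=0}^\infty \frac{(-it)^n}{n!}\,\mathbb{L}^n\ked{\OO_0}$, the $n=0$ term vanishing because $\K\ked{\OO_0}=0$. For $\alpha\neq 0$, Lemma~\ref{lemma 2} says that for $n\geq 1$ every even power $\mathbb{L}^{2n}$ is a scalar multiple of $\alpha\K+\gamma$ and every odd power $\mathbb{L}^{2n+1}$ is a scalar multiple of $\B$. Applied to the highest-weight vector $\ked{\OO_0}$, which is annihilated by $\L_-$, one has $(\alpha\K+\gamma)\ked{\OO_0}=\gamma\ked{\OO_0}$ and $\B\ked{\OO_0}=b_1\ked{\OO_1}$, so every term of the series is proportional either to $\ked{\OO_0}$ (even $n$) or to $\ked{\OO_1}$ (odd $n$). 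Since $\lVert\mathbb{L}^n\ked{\OO_0}\rVert$ is bounded by a constant times $(\sqrt{\alpha})^n$, the two subseries converge absolutely and resum to
\begin{equation}
e^{-it\L}\K e^{it\L}\ked{\OO_0} = \frac{\gamma}{\alpha}\bigl(\cosh\sqrt{\alpha}\,t-1\bigr)\ked{\OO_0} + i\,\frac{b_1}{\sqrt{\alpha}}\,\sinh\!\bigl(\sqrt{\alpha}\,t\bigr)\ked{\OO_1}.
\end{equation}
For $\alpha=0$ the series truncates, as Lemma~\ref{lemma 2} leaves only $\mathbb{L}=-\B$ and $\mathbb{L}^2=-\gamma$, giving $e^{-it\L}\K e^{it\L}\ked{\OO_0}= it\,b_1\ked{\OO_1}+\tfrac{\gamma}{2}t^2\ked{\OO_0}$.

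The last step is to read off the $\ked{\OO_0}$-component and recognize it as $K$. Using self-adjointness of $\L$ (i.e.\ unitarity of $e^{it\L}$) and orthonormality of the Krylov basis, $K(t)=\braked{\OO(t)}{\K\OO(t)}=\braked{\OO_0}{e^{-it\L}\K e^{it\L}\OO_0}$ is precisely the coefficient of $\ked{\OO_0}$ in the vector computed above. Hence $e^{-it\L}(\K-K)e^{it\L}\ked{\OO_0}=e^{-it\L}\K e^{it\L}\ked{\OO_0}-K\ked{\OO_0}$ has vanishing $\ked{\OO_0}$-component and is left proportional to $\ked{\OO_1}$ — explicitly $i(b_1/\sqrt{\alpha})\sinh(\sqrt{\alpha}t)\ked{\OO_1}$ for $\alpha\neq 0$ and $it\,b_1\ked{\OO_1}$ for $\alpha=0$. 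By the equivalence recalled at the start, this is exactly saturation of the dispersion bound, which proves the proposition. (As a byproduct one reads off $K(t)=\tfrac{2\gamma}{\alpha}\sinh^2\tfrac{\sqrt{\alpha}t}{2}$ for $\alpha>0$ and $K(t)=\tfrac{\gamma}{2}t^2$ for $\alpha=0$, recovering the growth laws quoted in the main text.)

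I expect the only genuine obstacle to be the analytic bookkeeping in the two infinite-dimensional cases ($\mathrm{HW}$ and $\mathrm{SL}(2,\mathbb{R})$): one must verify that $e^{it\L}$ is well defined on $\ked{\OO_0}$, that term-by-term resummation of the Taylor series is legitimate, and that $\ked{\OO(t)}$ lies in the common domain of $\K\L$ and $\L\K$ so that the underlying Robertson–Schr\"odinger inequality — hence the reformulation built on it — actually applies. For $\alpha=0$ this is immediate; for $\alpha>0$ it follows from the generalized coherent-state picture $\ked{\OO(t)}=D(it)\ked{\OO_0}$ together with the $(\sqrt{\alpha})^n$ bound on $\lVert\mathbb{L}^n\ked{\OO_0}\rVert$; and one may alternatively restrict, as in the geometrical Methods discussion, to finite Krylov dimension, where the computation is purely algebraic (the $\mathrm{SU}(2)$ representation). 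Everything else is routine.
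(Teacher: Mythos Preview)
Your proposal is correct and follows essentially the same route as the paper: Taylor expand $e^{-it\L}(\K-K)e^{it\L}\ked{\OO_0}$, feed in the closed-form nested commutators from Lemma~\ref{lemma 2}, resum the even and odd parts into $\cosh$ and $\sinh$, and observe that the $\ked{\OO_0}$-component cancels against $K(t)$ so that only a multiple of $\ked{\OO_1}$ survives. Your identification of the $\ked{\OO_0}$-coefficient with $K(t)$ via $K(t)=\braked{\OO_0}{e^{-it\L}\K e^{it\L}\OO_0}$ is exactly what the paper means by ``it follows from the definition of Krylov complexity that the first two terms must cancel,'' and your added remarks on domains and convergence are a welcome refinement the paper does not spell out.
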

\begin{proof}
By Lemma \ref{lemma 2} we have $\mathbb{L}^{2n} = (-1)^n\frac{1}{\alpha}(\sqrt{\alpha})^{2n}(\alpha\K + \gamma)$ and $\mathbb{L}^{2n+1} = (-1)^{n+1}\frac{1}{\sqrt{\alpha}}(\sqrt{\alpha})^{2n+1}\B$ when $\alpha\neq 0$. By substituting these into the Taylor expansion of $e^{-it\L}(\K-K)e^{it\L}\ked{\OO}$, we have
\begin{equation}
\label{expansion1}
    \begin{split}
        e^{-it\L}(\K-K)e^{it\L}\ked{\OO} &= \Big(-K + \sum_{n=1}^{\infty}\frac{1}{n!}(-i)^n\mathbb{L}^nt^n\Big)\ked{\OO}\\
        &= \Big(-K + \sum_{n=1}^{\infty}\frac{1}{(2n)!}(-i\mathbb{L}t)^{2n} + \sum_{n=0}^{\infty}\frac{1}{(2n+1)!}(-i\mathbb{L}t)^{2n+1}\Big)\ked{\OO}\\
        &= \Big(-K + \frac{\gamma}{\alpha}\sum_{n=1}^{\infty}\frac{1}{(2n)!}(\sqrt{\alpha}t)^{2n} + \frac{1}{\sqrt{\alpha}}i\B\sum_{n=0}^{\infty}\frac{1}{(2n+1)!}(\sqrt{\alpha}t)^{2n+1}\Big)\ked{\OO}\\
        &= \Big(-K + \frac{\gamma}{\alpha}(\cosh{\sqrt{\alpha }t}-1) + \frac{1}{\sqrt{\alpha}}i\B\sinh{\sqrt{\alpha}t}\Big)\ked{\OO}.
    \end{split}
\end{equation}
Since $\B\ked{\OO} = b_1\ked{\OO_1}$, it follows from the definition of Krylov complexity that the first two terms in the expression above must cancel. We thus have that
\begin{equation}
    e^{-it\L}(\K-K)e^{it\L}\ked{\OO} = \frac{b_1}{\sqrt{\alpha}}\sinh{\sqrt{\alpha}t}\ked{\OO_1}.
\end{equation}
When $\alpha = 0$, one has that $\mathbb{L} = -\B$, $\mathbb{L}^2 = -\gamma$ and $\mathbb{L}^n=0$ for $n>2$. Substituting these into the Taylor expansion, one finds that
\begin{equation}
\label{expansion2}
    \begin{split}
        e^{-it\L}(\K-K)e^{it\L}\ked{\OO} &= \Big(-K-i\mathbb{L}t - \frac{1}{2}\mathbb{L}^2t^2\Big)\ked{\OO}\\
        &= \Big(-K+\frac{\gamma}{2}t^2 +i\B t\Big)\ked{\OO} = ib_1\ked{\OO_1}.
    \end{split}
\end{equation}
We thus have that the algebra being closed is a sufficient requirement for saturating the dispersion bound.
\end{proof}
The proofs of Proposition \ref{prop nec} and \ref{prop suff} leads to the conclusion that saturation of the dispersion bound is equivalent with the complexity algebra being closed.
\begin{remark}
    We would like to point out that equation \eqref{expansion1} and \eqref{expansion2} shows that the general solution for Krylov complexity, whenever the dispersion bound is saturated, is given by  $K(t) = -\frac{2\gamma}{\alpha}\sin^2{\frac{\sqrt{-\alpha}t}{2}}$ when $\alpha< 0$, $K(t) = \frac{\gamma}{2}t^2$ when $\alpha = 0$ and $K(t) = \frac{2\gamma}{\alpha}\sinh^2{\frac{\sqrt{\alpha}t}{2}}$ when $\alpha> 0$. These three scenarios correspond to the three algebraic models discussed above:  $\textrm{SL}(2, \mathbb{R})$, $\textrm{HW}$ and $\textrm{SU}(2)$ respectively.
\end{remark}
\begin{remark}
    The requirement that $b_n\geq 0$ for all $n$ implies that $\gamma\geq 0$ and $\alpha\geq -\frac{2}{n-1}\gamma$ for all $n$. In the infinite dimensional case we see that this implies that $\alpha\geq 0$. In the finite-dimensional case, the condition $b_D=0$ implies that $\alpha = -\frac{2}{D-1}\gamma$ and so the solution of Krylov complexity only depends on $\gamma$, namely $K(t) = (D-1)\sin^2{\sqrt{\frac{\gamma}{2(D-1)}}t}$. By setting $\omega = \sqrt{\frac{\gamma}{2(D-1)}}$ we have that $K(t) = (D-1)\sin^2{\omega t}$ and the Lanczos coefficients grow according to $b_n = \omega\sqrt{n(D-n)}$. Therefore, by comparison with Eq.~\eqref{bn-SU2}, we see that, in a finite $D$-dimensional Krylov space, the saturation of the bound for each time can be achieved only when the dynamics is governed, up to a multiplicative constant, by the $\textrm{SU}(2)$ algebra, in the representation $j=\frac{D-1}{2}$.
\end{remark}

\end{document}